  \def\alpha{alpha}%
  \def\({}%
  \def\){}%
  \def\texttt#1{<#1>}%
\theoremstyle{plain}
\newtheorem{theorem}{Theorem}
\newtheorem{proposition}{Proposition}
\newtheorem{corollary}{Corollary}
\newtheorem{lemma}{Lemma}
\newtheorem{step}{Step}
\newtheorem*{lemma*}{Lemma}
\theoremstyle{definition}
\newtheorem{definition}{Definition}
\newtheorem{remark}{Remark}
\newtheorem{axiom}{Axiom}
\DeclareMathOperator*{\argmax}{argmax}
\newcommand{\varnotsuccsim}{\mathrel{\mathpalette\varn@t\succsim}}
\newcommand{\varn@t}[2]{%
  \vphantom{/{#2}}%
  \ooalign{\hfil$\m@th#1/\mkern2mu$\cr\hfil$\m@th#1#2$\hfil\cr}%
}
\title{Hoping for the best while preparing for the worst in the face of uncertainty\thanks{We are deeply grateful to Marc Fleurbaey, Antonin Macé, William Thomson, and Frank Riedel for their support and their advice. We also thank (in alphabetical order) Susumu Cato, Alain Chateauneuf, Maya Eden, Soo Hong Chew, Jose Heleno Faro, Lorenz Goette, Chen Li, Marcus Pivato, Xi Zhi ``RC'' Lim, Lasse Mononen, Xiangyu Qu and Peter Wakker for their valuable comments. This paper was presented in the theory seminar of the University of Rochester, the economic theory lunch seminar of the Bielefeld University, the TOM seminar of the Paris School of economics, the theory seminar of the Karlsruhe Technical University, the 17th SCW meeting  (Paris), the TUS conference (Paris), the RUD conference (Manchester), and the SDM conference (Chengdu), of which we thank all the participants. We warmly acknowledge the help of an anonymous referee whose detailed and numerous suggestions greatly improved the paper.}}
\author{Pierre Bardier\thanks{Paris School of Economics, \'Ecole Normale Sup\'erieure de Paris (Paris Sciences et Lettres). E-mail: pierre.bardier@ens.fr} \and Bach Dong-Xuan\thanks{Center for Intelligence Economic Science, Southwestern University of Finance and Economics. E-mail: bachdongx@gmail.com} \and Van-Quy Nguyen\thanks{Universit\'e Paris-Saclay, Univ Evry, EPEE, France. Faculty of Mathematical Economics, National Economics University, Vietnam. E-mail: quynv@neu.edu.vn}}
\date{}
\begin{document}
\maketitle
\begin{abstract}   

We propose and axiomatize a new model of incomplete preferences under uncertainty, hope-and-prepare preferences: one act is preferred to another if and only if both its optimistic evaluation ---the welfare level attained in the best-case scenario--- and its pessimistic evaluation — the welfare level attained in the worst-case scenario--- rank it above the other. Both evaluations are computed over sets of probability distributions. We make the case that, compared to existing incomplete criteria under ambiguity, hope-and-prepare preferences address the trade-off between conviction and decisiveness in a new way, which is more favorable to decisiveness. We characterize a completion of an incomplete hope-and-prepare preference relation admitting an (asymmetric) $\alpha$-\textit{maxmin expected utility} representation, in which $\alpha$ is unique. Conversely, given a standard $\alpha$-MEU preference and a fixed value of $\alpha$, we recover the concordant hope-and-prepare preference that it completes.

\medskip
\noindent \textbf{Keywords:} Decision theory; incomplete preference; dual-selves; $\alpha$-maxmin expected utility; non-obvious manipulability. 

\medskip
\noindent \textbf{JEL classification:} D01; D81; D90
\end{abstract}

\section{Introduction}
\newcommand{\chapquote}[3]{\begin{quotation} \textit{#1} \end{quotation} \begin{flushright} - #2, \textit{#3}\end{flushright} }

\chapquote{``Hoping for the best, prepared for the worst, and unsurprised by anything in between.''}{Maya Angelou}{\textit{I Know Why the Caged Bird Sings}.}
 
The complexity of economic decisions is likely to result in agents' inability or unwillingness to decide over the uncertain options they are supposed to compare. In this regard, the restrictiveness of the assumption that individual preferences be complete was early acknowledged,\footnote{For instance, \cite{Au1962} wrote: ``\textit{Of all the axioms of utility theory, the completeness axiom is perhaps the most questionable. Like others of the axioms, it is inaccurate as a description of real life; but unlike them, we find it hard to accept even from a normative viewpoint.}'' \cite{S1989}, commenting on his characterization of the maxmin criterion, depicted the completeness axiom as ``\textit{the most restrictive and demanding assumption}.''} and was recently highlighted by empirical studies.\footnote{See \cite{CETTOLIN2019547}, \cite{nielsen2022revealed}.} We propose and characterize a new \textit{incomplete} decision criterion according to which, in the face of Knightian uncertainty \citep{knight1921risk}, agents \textit{both hope for the best and prepare for the worst}.

We study preferences over acts $f: S \to X$, which are mappings from states of the world to outcomes, and we introduce and axiomatize preferences $\succsim$ admitting the following representation:

\begin{align}\label{introrep}
f \succsim g \iff \begin{cases}
\min_{p\in C} \int u(f) dp \geq \min_{p\in C} \int u(g) dp\\[4pt]
\max_{p\in D} \int u(f) dp \geq \max_{p\in D} \int u(g) dp
\end{cases},
\end{align}
where $u$ is a numerical representation of preferences over outcomes, and $C$ and $D$ are sets of probability distributions over the states, interpreted as sets of different \textit{scenarios}.\footnote{The function $u: X \to \mathbb{R}$ is non-constant, affine and unique up to positive affine transformation. The sets $C$ and $D$ are unique, non-disjoint, compact and convex.} We choose to work with a preorder to properly distinguish between indifference and incomparability. A decision maker (DM) following such a criterion ranks an act $f$ above an act $g$ if and only if $f$ provides a higher expected utility than $g$ in the worst-case scenario in $C$ as well as in the best-case scenario in $D$. The idea that \textit{both} worst-case and best-case evaluations serve as reference points for decision under uncertainty, in the case of complete preferences, dates back at least to \cite{hurwicz1951optimality}, and was explored in the multiple-prior setting by \cite{marinacci2002probabilistic} and \cite{GM2004}. 

Our criterion is based on the conjunction of an optimistic (or ambiguity-seeking) assessment and of a pessimistic (or ambiguity-averse) assessment.\footnote{$C$ and $D$ being non-disjoint, the expected utility in the best-case scenario is higher.} We then interpret a DM with such a preference as hoping for the best scenario to realize, while also preparing for the worst one to happen, when evaluating each option: we thus refer to a preference relation admitting such a representation as a \textit{hope-and-prepare preference}.
 
We shall give special attention in our analysis to the \textit{concordant case} in which $C=D$ ---which we also characterize, adding two original axioms to the set of axioms characterizing the general representation (\ref{introrep}). Acts are then evaluated according to \textit{the interval} of all expected utility levels that they induce across all possible scenarios. More precisely, an act $f$ is preferred to an act $g$ if and only if \textit{any expected utility level that is attainable from $g$ but not from $f$ is below any expected utility level that is attainable from $f$, and there exists at least one level that is indeed attainable from $g$ but not from $f$.} This intuitive criterion for comparing ranges of expected utility levels corresponds to  \textit{the strong set order}, which is, arguably, the most common way to compare intervals. 

\bigskip
The conjunction of a best-case evaluation and of a worst-case evaluation at play in our criterion is akin to the one at play in the notion of \textit{obvious manipulation} \citep{TM2020}, defined for revelation games in which the uncertainty faced by an agent concerns others' messages. Accordingly, the significant practical relevance of the notion of obvious manipulation provides support for our criterion within uncertain strategic environments. This notion gives an explanation, for instance, of untruthful reporting strategies that have been consistently observed in the \textit{Immediate Acceptance mechanism}, used to match students with schools.\footnote{See \cite{pathak2008leveling} and \cite{dur2018identifying}.}

The scope for applications of our criterion goes beyond strategic interactions. The notion that extreme scenarios guide individual decisions is recognized for various social and economic domains where ambiguity is present. In this regard, let us simply mention the evaluation of financial assets (\cite{BGGZ2010}, \cite{schroder2011investment}, \cite{https://doi.org/10.3982/QE243}), or the evaluation of different medical treatments by physicians and patients (\cite{doi:10.7326/0003-4819-138-5-200303040-00028}, \cite{taylor2017framework}); we discuss a third example in more detail.

It is not unusual for practitioners, reporters or fans to evaluate ``young prospects'' participating in the annual \textit{Draft} in North-American sports leagues---we take the example of the National Basketball Association league (NBA)---according to ``ceiling and floor scenarios.''\footnote{See, for example, James Hansen, \href{https://www.slcdunk.com/nba-draft/2023/6/9/23755886/what-makes-an-nba-draft-prospect-high-ceiling-or-high-floor-2023-mock-draft-big-board?fbclid=IwY2xjawG-oldleHRuA2FlbQIxMAABHWVQi0kFSpS3SFLaRaW1a9GZNA0zi8Z6SO9_UyuVtn8AydXOzzJPTWEF2A_aem_hKrElRHNqPZanqS2qRwhlg}{``\textit{What makes an NBA Draft prospect high ceiling or high floor?''}}, SLC Dunk, June 2023, and Kyle Boone, \href{https://www.cbssports.com/nba/news/nba-draft-2024-ceiling-and-floor-scenarios-the-best-or-worst-case-projections-for-five-top-prospects/}{``\textit{NBA Draft 2024 ceiling and floor scenarios: The best or worst case projections for five top prospects''}}, CBS Sports, June 2024. We note that the use of the expressions ``ceiling'' and ``floor'' suggests that any case lying ``in between'' is considered possible.} This can be formulated in our framework. There is potentially a myriad of parameters that the agent considers relevant for the evaluation of prospects: a state is a particular configuration of parameters.\footnote{A state may thus encompass the rosters of coaches and players, at the beginning of the season and after the winter ``trade'' period, of each franchise, their financial capacities, the performance of players already in the league, the progression of each of these prospects, the approach to officiating favored by the league's executives, \textit{etc}.} In this complex environment, the agent faces ambiguity and must compare prospects on the basis of a set $C$ of probability distributions over configurations of parameters. Loosely speaking, each player is identified with an act $f$, indicating their overall performance in each state, which is then evaluated according to a utility function $u$, and, for every scenario $p \in C$, the agent can compute the expectation of $u(f)$ according to $p$. 

The incompleteness of a criterion such as ours reflects the \textit{necessity to have sufficient conviction} when declaring that a player is more promising than an other one. On the other hand, a criterion should not be \textit{too incomplete}; let us illustrate this point by comparing our criterion to two alternative ones. Given $u$ and $C$, the agent could require, for a ``player $f$'' to be declared more promising than a ``player $g$'', that, for each scenario in $C$, the expected utility associated with $g$ be lower than the expected utility level associated with $f$ (\cite{B2002}). One could even require that any expected utility level attainable from $g$ be lower than any expected utility level attainable from $f$ (\cite{E2022})---that the ``ceiling'' of $g$ be lower than the ``floor'' of $f$. Both of these conditions are stronger than condition (\ref{introrep}), expressing a more demanding notion of sufficient conviction. However, it may very well be the case that only ``generational talents'' such as Victor Wembanyama,\footnote{See, for example, Sam Harris, \href{https://www.bbc.com/sport/olympics/articles/crg7kgyg9r0o}{``\textit{Why `alien' Wembanyama is France's next big thing - literally''}}, BBC Sports, July 2024.}  (who was present in the 2023 Draft) be distinguished from other players on the basis of these more conservative criteria, and that for rather homogeneous cohorts such as the 2024 cohort, the agent fail to rank any player above an other one.\footnote{See, for example, Adam Finkelstein, \href{https://www.cbssports.com/nba/news/no-stars-have-revealed-themselves-in-the-2024-nba-draft-but-history-tells-us-theyre-hiding-in-plain-sight/.}{``\textit{No stars have revealed themselves in the 2024 NBA Draft, but history tells us they're hiding in plain sight''}}, CBS Sports, June 23 2024.} In practical terms, according to our criterion, a ``player $f$'' is declared more promising than a ``player $g$'' if and only if anything that $g$ could achieve and that $f$ could not is considered worse than anything $f$ could achieve. With hope-and-prepare preferences, which, in this case, compare players on the basis of the associated ranges of expected utility according to the strong set order, the \textit{trade-off between decisiveness and conviction} is addressed in a way that is more favorable to decisiveness. 

\bigskip
The original axiom involved in our characterization is interpreted along this line:  Axiom \ref{axiomtransiincomp_new} provides a relatively weak sufficient condition for \textit{comparability}. Our axiomatization maintains the assumption that preferences are complete over constant acts, deemed as the simplest ones. According to Axiom \ref{axiomtransiincomp_new}, if \textit{i)} the DM cannot compare $f$ to the constant $x$, while she declares $x$ more desirable than $g$ and, on the other hand, \textit{ii)} she cannot compare $g$ to the constant act $y$, while she declares $f$ more desirable than $y$, then she declares $f$ more desirable than $g$. Thus, \textit{two specific aligned pieces of evidence are enough} to conclude that an act is better than another, and Axiom \ref{axiomtransiincomp_new} may be seen as formulating a minimal departure from the completeness of a standard expected utility preference relation.

\color{black}
Furthermore, in order to account for typical situations in which agents \textit{have to} choose between two options, even if they lack conviction to express a clear preference between them in the first place, we study the completion of hope-and-prepare preferences.\footnote{From a theoretical point of view, studying a completion of an incomplete preference relation enables to use standard mathematical tools, for example for utility maximization and welfare analysis.} We demonstrate that the \textit{invariant biseparable complete extension}
of a hope-and-prepare preference admits an \textit{asymmetric\footnote{``Asymmetric'' refers to the fact that best and worst cases may be taken on different sets of scenarios.} $\alpha$-maxmin expected utility} ($\alpha$-MEU) representation---and a standard $\alpha$-maxmin representation if the hope-and-prepare preference is concordant. Notably, the asymmetric $\alpha$-MEU retains much of the tractability of the standard $\alpha$-MEU---which is beneficial for applications---while remaining flexible enough to accommodate mixed ambiguity attitudes \citep{CE2022}.\footnote{Specifically, it captures ambiguity-averse behavior for \textit{large/moderate-likelihood} events, ambiguity-seeking behavior for \textit{small-likelihood} events, and \textit{source-dependent} ambiguity attitudes \citep{CE2022}.} Importantly, in the representation we obtain, \textit{the weight $\alpha$ does not depend on the considered acts, and is unique whenever the extended hope-and-prepare preference is incomplete.} Conversely, given a standard $\alpha$-MEU preference with fixed $\alpha$, we can recover the hope-and-prepare preference whose complete extension is this $\alpha$-MEU preference relation. 

Finally, answering two natural questions of comparative statics that emerge from the proposition of a new type of incomplete preference under ambiguity, we compare the degree of incompleteness of our criterion to that of Bewley preferences
(\cite{B2002}) and of twofold preferences (\cite{E2022}), and we provide a way
to compare the ambiguity attitudes of two hope-and-prepare preferences.

\bigskip
Our paper is organized as follows. We situate our approach within the literature in Section \ref{Related lit}. We define the formal framework and introduce our criterion in Section \ref{sec:representation}. In Section \ref{sec:characterization}, we present the two main representation results. In Section \ref{sec:extension}, we investigate the completion of our criterion. Section \ref{sec:comp} is dedicated to the comparative statics questions mentioned above. In Section \ref{sec:choquet}, we characterize a class of preference relations combining a pessimistic and an optimistic Choquet integrals. The conclusions are presented in Section \ref{sec:conclusion}. All proofs can be found in the appendix.

\section{Related literature}\label{Related lit}
 
A DM hoping for the best while also preparing for the worst responds to uncertainty by combining opposite ambiguity attitudes. In this perspective, one may interpret a DM with a hope-and-prepare preference as requiring that her optimistic self and her pessimistic self be unanimous for her to rank some act above another one. The idea that the DM consists of multiple (strategic) selves appears frequently in behavioral economics.\footnote{\cite{TS1981}, \cite{BENABOU2002419}, \cite{10.1257/aer.96.5.1449}, \cite{10.1257/aer.98.4.1312}.} In recent works, \cite{CE2022} and \cite{Xia2020} provided axiomatizations for preferences involving two selves, called by the former \textit{dual-self expected utility}. Their representation differs from ours in that the agent's final decision is to be interpreted as the result of a \textit{leader-follower game} between two selves.\footnote{In order to evaluate an act, the optimistic self moves first by choosing a collection of beliefs that maximizes the evaluation of the pessimistic self, who chooses the belief that minimizes the expected utility of the act among the set chosen by the optimistic self.} In contrast, in our representation, the decision is induced by a requirement of unanimity imposed by the agent herself on the assessments of her two selves.

Our representation is connected to the concept of obvious manipulation proposed in the context of mechanism design by \cite{TM2020}. A revelation mechanism is said to be \textit{non-obviously manipulable} if, for any agent and any potential untruthful report from her, revealing her own type leads to a more desirable outcome in both of the following cases: when the others' reports are the most favourable to her, and when they are the least favourable. In our model, in the same spirit, an option---such as an untruthful report in the mechanism design context---is abandoned for an alternative only if this alternative leads to preferred outcomes in both the best and the worst scenarios among given sets of probability measures. 

The relation of our contribution to the concept of non-obvious manipulation mirrors that of \cite{E2022} to the concept of obvious dominance, due to \cite{Li2017}. Informally, in a direct mechanism, for each agent, given a belief on the profile of strategic reports of other agents (a scenario), each possible report about her own type (which corresponds to an act), induces an expected payoff. When the set of scenarios according to which all acts are evaluated is the set of distributions whose support is included in the Cartesian product of other agents' type spaces, the act $f$ is preferred to the act $g$ by a \textit{twofold multi-prior preference} if and only if the strategy corresponding to $f$ obviously dominates the strategy corresponding to $g$, and, on the other hand, $f$ is preferred to $g$ by a \textit{hope-and-prepare preference} if and only if the strategy corresponding to $f$ dominates the strategy corresponding to $g$ in the sense of \cite{TM2020}.\footnote{We refer the reader to Appendix A of \cite{bardier2024hoping}.}

Hope-and-prepare preferences define a partial order on acts. Pioneering work by \cite{Au1962}, \cite{B2002} and \cite{DFO2004} studied the representation of incomplete preferences under risk and uncertainty. Incomplete preferences in non-deterministic environments have been the object of a growing literature: see, for example, 
\cite{nascimento2011}
\cite{GK2012}, \cite{OOR2012},
\cite{FARO2015},
\cite{MS2015}, 
\cite{Hill2016}, \cite{KARNI2020}, \cite{cusumano2021} and \cite{E2022}. The closest model of incomplete preference to ours, apart from those studied in \cite{B2002} and \cite{E2022}, both compared to ours in the introduction, is introduced in \cite{nascimento2011}. As a special case of their main result, they study a criterion in which the DM considers several sets of scenarios, in each of which the performance of an act is evaluated according to the worst-case expected utility level. Then, an act is preferred to another one if and only if it performs better in each set of scenarios. Hope-and-prepare preferences enable to capture a different type of ambiguity attitude, through the consideration of the optimistic assessment. We discuss in more details how our work relates to \cite{B2002}, \cite{nascimento2011} and \cite{E2022} in the next sections.

In line with Hurwicz's approach for decision making under complete ignorance (\cite{hurwicz1951optimality}), the $\alpha$-MEU model was proposed to capture the idea that, under ambiguity, worst and best expected utility levels, over \textit{one} set of probability measures, can serve as sufficient statistics for the DM: she then computes an $\alpha$-weighted average of these levels (\cite{marinacci2002probabilistic}, \cite{kopylov2002alpha}, \cite{GM2004}). Among the recent explorations of (variants of) the $\alpha$-maxmin model,\footnote{\cite{CHATEAUNEUF2007538}, \cite{EICHBERGER20111684}, \cite{GUL2015465}, \cite{FRY2022}, \cite{klibanoff2022foundations}, \cite{HARTMANN2023105719}, \cite{hill2023beyond}, \cite{chateauneuf2024alpha}. Let us also mention the \textit{geometric} $\alpha$-MEU model (\cite{binmore}) and the ordinal Hurwicz expected utility (\cite{grant2020worst}).} the one of \cite{FRY2022} is particularly important for the way we characterize the \textit{asymmetric} $\alpha$-maxmin model as representing the completion of hope-and-prepare preferences. In the \textit{objective and subjective rationality} framework proposed by \cite{GMMS2010}, they show that the \textit{invariant biseparable complete extension} of a Bewley preference admits a standard $\alpha$-MEU representation. 

We show that the asymmetric $\alpha$-maxmin representation characterizes the invariant biseparable complete extension of a hope-and-prepare preference relation. It reduces to standard $\alpha$-maxmin in the case of a concordant hope-and-prepare preference relation. Beyond the fact that we consider the completion of a new type of preferences, our result has three salient features: the representation is asymmetric in general, $\alpha$ does not depend on the considered acts, and is unique. We also investigate how to recover a hope-and-prepare preference from a given $\alpha$-MEU preference.

\section{Setup and representation}\label{sec:representation}

We introduce the decision-theoretic framework, define (concordant) hope-and-prepare preferences, and discuss how they relate to several models of incomplete preferences under ambiguity.

\subsection{Model} 
Our analysis is conducted in the classical framework proposed by \cite{AA1963}. Uncertainty is modeled through a set $S$ of \textit{states of the world}, endowed with an algebra $\Sigma$ of subsets of $S$ called \textit{events}, and a non-empty set of consequences $X$, which is a non-singleton convex subset of a real vector space. \textit{A simple act} is defined as a function $f:S \rightarrow X$ which takes finitely many values and is measurable with respect to $\Sigma$; we denote by $\mathcal{F}$ the set of all simple acts. The \textit{mixture} of two simple acts $f$ and $g$, for any $\alpha \in [0,1]$, denoted by $\alpha f + (1-\alpha)g$, is then defined by setting, for each $s \in S$, $[\alpha f+ (1-\alpha)g](s) = \alpha f(s) + (1-\alpha) g(s)$. With the usual slight abuse of notation, for all $x\in X$, we use $x$ to denote the constant act defined by $f_x (s) = x$ for all $s\in S$. We use $\Delta$ to denote the set of all finitely additive probability distributions on $(S,\Sigma)$, endowed with the weak* topology.\footnote{The set of finitely additive bounded measures on $(S,\Sigma)$ is the dual of the set of all measurable real-valued bounded functions on $(S,\Sigma)$. Thus the weak* topology on $\Delta$ is defined according to the following convergence notion: we say that a sequence $\{p_n\}$ of elements of $\Delta$ converges to $p \in \Delta$ if for all measurable bounded function $\varphi: S \to \mathbb{R}$, $\int \varphi dp_n$ converges to $\int \varphi dp$.} We refer to a measure $p \in \Delta$ as a \textit{scenario} according to which simple acts are evaluated.\footnote{From now on, we refer to simple acts as ``acts''.}

We consider a DM whose preference is represented by a binary relation $\succsim \: \subseteq \mathcal{F} \times \mathcal{F}$. The relation $\succsim$ is a preorder, \textit{i.e.} it is reflexive and transitive. We use the standard notation $f \succsim g$ to denote $(f,g) \in \: \succsim$, and interpret $f \succsim g$ as reflecting the fact that the DM considers $f$ to be at least as desirable as $g$ \textit{with sufficient conviction}. We write $f \succ g$ if $f \succsim g$ and $g \varnotsuccsim f$, and say that $f$ is \textit{strictly preferred} to $g$. We write $f \sim g$ if $f \succsim g$ and $g \succsim f$. If $f \varnotsuccsim g$ and $g \varnotsuccsim f$, we write $f \Join g$, and say that $f$ and $g$ are \textit{incomparable}. Importantly, working with a preorder enables us to capture the difference between indifference and incomparability.

We denote the set of vectors whose $k$ elements are non-negative by $\mathbb{R}^k_+$, the set of vectors whose $k$ elements are positive by $\mathbb{R}^k_{++}$, for any natural number $k$.

\subsection{Hope-and-prepare preferences}
\subsubsection{Definition}
Our representation involves multiple priors:\footnote{\cite{EMT2012} and \cite{GM2016} both provide a review of the ways in which ambiguity, and ambiguity attitudes, have been modeled in order to offer alternatives to the traditional Bayesian framework. Multiple prior models stand out as one of the main lines of research.} the DM has a set of relevant beliefs according to which she evaluates acts. 
\begin{definition}\label{defunanimrep}
A binary relation $\succ$ is a \textit{hope-and-prepare} preference if 
\begin{align*}
f \succsim g \iff \begin{cases}
\min_{p\in C} \int u(f) dp \geq \min_{p\in C} \int u(g) dp \\[4pt]
\max_{p\in D} \int u(f) dp \geq \max_{p\in D} \int u(g) dp
\end{cases},
\end{align*}
where $u$ is a non-constant affine function defined on $X$, and $C$ and $D$ are two compact and convex subsets of $\Delta$ with $C\cap D \neq \emptyset$.

The representation is \textit{concordant} if $C=D$. 
\end{definition}

We sometimes write that $\succsim$ admits the representation $\big(u,C,D\big)$ to refer to the hope-and-prepare representation given in Definition \ref{defunanimrep}. We obtain in our axiomatization the uniqueness up to affine transformation of $u$, and the uniqueness of $C$ and $D$. \textit{We sometimes write, then, as a shortcut, that $\succsim$ admits the unique representation $(u,C,D)$.}\footnote{As opposed to writing that $\succ$ admits representation $(u,C,D)$, where $u$ is unique, \textit{up to affine transformation}, and $C$ and $D$ are unique.}

\subsubsection{Discussion}

Consider such a preference relation $\succsim$, with representation $(u,C,D)$. An act $f$ is at least as desirable as an act $g$ if and only if $f$ gives at least as high an expected utility as $g$ when they are evaluated according to their best-case scenario in $D$, \textit{and} gives at least as high an expected utility as $g$ when they are evaluated according to their worst-case scenario in $C$. This conjunction of an optimistic (or ambiguity-seeking) assessment and of a pessimistic (or ambiguity-averse) assessment models a DM hoping for the best scenario to realize, while also preparing for the worst one, when she evaluates each option.

The combination of two such opposite ambiguity attitudes may also be interpreted in the perspective of a DM consisting of two selves: for the DM to consider with sufficiently strong conviction that an act $f$ is at least as desirable as an act $g$, it is necessary, and sufficient, that her optimistic self and pessimistic self be unanimous over the ranking of $f$ and $g$.

Let $\succsim$ be a hope-and-prepare preference relation with representation $(u,C,C)$. Then, the DM evaluates any act $f$ in terms of its range $R(f)=\{\int u(f)d\mu : \mu \in C\}$ of possible expected utility levels, which, as $C$ is convex and compact, is a closed interval. Consider another act $g \in \mathcal{F}$ and suppose that $R(f)=[a,b]$ and $R(g)=[c,d]$. Then $f$ is at least as desirable as $g$ if and only if $a \geq c$ and $b \geq d$. The act $f$ is at least as desirable as $g$ if and only if any expected utility level that is attainable from $g$ but not from $f$ is below any expected utility level that is attainable from $f$, and there exists at least one level that is indeed attainable from $g$ but not from $f$. This intuitive criterion for comparing ranges of expected utility levels is \textit{the strong set order}, applied to the special case of intervals.

\paragraph{Relation to other incomplete criteria under ambiguity.} At this point, it is interesting to describe how the way ranges are compared according to concordant hope-and-prepare preferences can be formally related to the way in which they are compared according to concordant \textit{twofold preferences}, introduced by \cite{E2022}.

Let us insist on the fact that we study a preorder on $\mathcal{F}$, in contrast to \cite{E2022} who work with a strict preference relation ---for the sake of comparison, we present a natural ``preorder version'' of their representation. In our view, it is important to distinguish indifference from incomparability, if only through the fact that an act $f$ is better described as belonging to the same indifference class as itself than as being incomparable to itself. A previous version of this work was written assuming a strict preference relation, and, as can be checked in \cite{bardier2024hoping}, while proofs are lengthier, the analysis is very similar.

\begin{definition}(\cite{E2022})\label{defuntwofold}
A binary relation $\succsim$ is a (multi-prior) \textit{twofold preference} if 
$$f \succsim g \iff
\min_{p\in C} \int u(f) dp \geq \max_{p\in D} \int u(g) dp, $$
where $u$ is a non-constant affine function defined on $X$, $C$ and $D$ are two compact and convex subsets of $\Delta$ with $C\cap D \neq \emptyset$. The representation is said concordant if $C=D$.\footnote{They obtain the uniqueness, up to affine transformation, of $u$, and the uniqueness of $C$ and $D$ in their axiomatization.}
\end{definition} 

Consider $\succsim_{HP}$ a concordant hope-and-prepare preference and $\succsim_T$ a concordant twofold preference with the same representing utility function $u$ on $X$ and the same set of scenarios $C \in \Delta$. The former is always \textit{more complete} than the latter, in the sense that it is an extension of it---a general statement, beyond the case of concordant preferences, is given in Proposition \ref{prop_comp_Bew_Two}. This observation is the basis on which we compared concordant hope-and-prepare preferences to concordant twofold preferences in the NBA example of the introduction, and a similar comparison to Bewley preferences should be made (again, see Proposition \ref{prop_comp_Bew_Two}): 

\begin{definition}\label{defbewley}(\cite{B2002})
A binary relation $\succsim$ is a (multi-prior) \textit{Bewley preference} if 
$$f \succsim g \iff
 \int u(f) dp \geq  \int u(g) dp \text{ for all } p \in C,$$
where $u$ is a non-constant affine function defined on $X$, $C$ is a non-empty compact and convex subset of $\Delta$.\footnote{Similarly to the expression we used for our criterion, we will say that the twofold preference $\succsim_T$ admits the unique representation $(u,C_T,D_T)$, and that the Bewley preference $\succsim_B$ admits the unique representation $(u,C_B)$ to refer to the fact that $u$ is unique up to affine transformation, and $C_T$, $D_T$ and $C_B$ are unique.}   
\end{definition}

As we highlighted in the NBA example, with hope-and-prepare preferences, the trade-off between decisiveness and conviction is addressed in a way that is more favorable to decisiveness, compared to twofold preferences and to Bewley preferences. That is, our criterion still reflects the necessity for the DM to have sufficient conviction when declaring an act at least as desirable as another, while it induces more choices.\medskip

A special case of the preferences studied in \cite{nascimento2011} is, as ours, defined by the conjunction of different assessments.

\begin{definition}(\cite{nascimento2011})
A binary relation $\succsim$ is a \textit{N}\&\textit{R preference} if 
$$f \succsim g \iff
 \min_{p \in C}\int u(f) dp \geq  \min_{p \in C}\int u(g) dp \text{ for all } C \in \mathcal{C},$$
where $u$ is a non-constant affine function defined on $X$, $\mathcal{C}$ is a class of non-empty compact and convex subsets of $\Delta$.    
\end{definition}

Our approach and criterion differ from those of \cite{nascimento2011} in several aspects. In terms of methodology, in contrast to them, we provide an axiomatization directly on the domain of simple acts.\footnote{Their result is obtained on the set of lotteries on simple acts.} In addition, in our representation result, the pair of sets of probability measures is unique.\footnote{In their representation, $\mathcal{C}$ is not unique. The authors obtain the uniqueness of the closure of the convex hull of $\mathcal{C}$, where the set of subsets of the simplex is endowed with the Hausdorff topology.}

Furthermore, while the set $\mathcal{C}$ may be infinite in their model---the DM then takes decisions based on the unanimity of an arbitrary, potentially infinite, number of selves---our criterion requires the conjunction of merely two evaluations. 

From a behavioral perspective, N\&R preferences are based on the unanimity of a collection of MEU representations, reflecting a pessimistic attitude toward ambiguity. However, substantial experimental evidence points to more nuanced patterns of ambiguity attitudes: the same individuals may exhibit ambiguity aversion in some decision problems while displaying ambiguity-seeking behavior in others---see \cite{TK2015} for a survey. In contrast to N\&R preferences, hope-and-prepare preferences encompass, in addition to a pessimistic perspective, an optimistic one, and therefore, do not exhibit a systematic attitude toward ambiguity. In line with this feature of a hope-and-prepare preference relation, which is incomplete, we show that an invariant biseparable completion of such a preference admits an asymmetric $\alpha$-MEU representation, which can accommodate a mixture of different ambiguity attitudes (see Section \ref{sec:extension}). 

Finally, we emphasized the importance of the special case of concordant hope-and-prepare preferences, defined by a very intuitive comparison of ranges of expected utility: this has no counterpart in the N\&R model.

\medskip
\noindent
\textbf{Non-concordant hope-and-prepare preferences.} We note that in the case of a non-concordant representation, the two sets of beliefs may be obtained as the DM distorts the probability measures of a single set, possibly informed by expert sources. The DM may distort probabilities differently when adopting an optimistic versus a pessimistic perspective, as illustrated in the following example. 

\medskip

\noindent
\begin{quote}
\small
\textbf{Example.} Consider an NBA fan who may bet on the next game involving her favorite team, team A, which, sadly, happens not to be a ``500 team'' in the current season.\footnote{This expression is used to denote teams that have lost the majority of their games during the current season.} The opponent, team B, who has a better record, is more likely to win the game. The set of states of the world is the set of possible scores for the teams, $S=[70,140] \times [70,140]$, of generic element $(s_A,s_B)$. The set of events $\Sigma$ is the set of all subsets of $S$. The DM may take a bet on the subset $S_{\{A \text{ wins}\}}=\{ (s_A,s_B) \in S \mid s_A\geq s_B\}$---either A wins, or the teams have equal scores before the overtime starts---on the basis of several probability measures on $(S, \Sigma)$ reflecting the opinions of different experts or brokers. Let $C$ be (the convex hull of) this set of probability measures, and assume that the value associated with $S_{\{A \text{ wins}\}}$ across measures in $C$ browses $[\frac{1}{3},\frac{1}{2}]$. 

The DM has a utility function over money $u: \mathbb{R} \to \mathbb{R}$, $u(x)=\frac{1}{2}x$, and must compare, in particular, the two following bets, or acts: for all $s \in S$,
\begin{align*}
    &f(s)=50\$ \text{ if } s \in S_{\{A \text{ wins}\}},\\
    &f(s)=-30\$ \text{ otherwise;}\\
    &g(s)=34\$ \text{ if } s \in S_{\{A \text{ wins}\}}^c,\\
    &g(s)=-30\$ \text{ otherwise,}
\end{align*}
where $S_{\{A \text{ wins}\}}^c$ is the complement of $S_{\{A \text{ wins}\}}$. The act $g$ amounts to betting 30 dollars \textit{against} team A. 

If the DM is consequentialist in the sense that she evaluates acts exclusively according to the payoff they may induce, then $C$ appears as the natural set of scenarios to use: according to the hope-and-prepare preference relation $\succsim$, with representation $(u,C,C)$, $g\succsim f$.\footnote{The minimal and maximal expected utility level induced by $f$ is $25 \times \frac{1}{3}-15\times \frac{2}{3}\approx-1.67$, and the maximal one is $25 \times \frac{1}{2}-15\times \frac{1}{2}=5$. The minimal expected utility level induced by $g$ is $17 \times \frac{1}{2}-15\times \frac{1}{2}=1$, and the maximal one is $17 \times \frac{2}{3}-15 \times \frac{1}{3}\approx 6.33$.}

On the other hand, non-concordant hope-and-prepare preferences can capture the non-consequentialist reluctance of the DM to bet against her favorite team. This type of non-consequentialism has long been documented in the psychology literature on ``valence'', has been established in the sport context by several recent empirical studies \citep{morewedge2018betting,kossuth2020does,donkor2023identity}, and has motivated the axiomatic work of \cite{adam2024event}. In the context of this example, starting from $C$, the DM, while hoping for the best, may, for instance, distort probability measures giving too high a value to states---which she dislikes---in which team A loses by too high a margin. Let $D$ be this set of distorted probability measures, and assume that the value associated with $S_{\{A \text{ wins}\}}$ across measures in $D$ browses $[\frac{1}{3}+0.02,\frac{1}{2}+0.02]$. Assume that, on the contrary, when she performs a pessimistic assessment, the DM does not depart from the scenarios given by the experts. According to the hope-and-prepare preference relation $\succsim'$, with representation $(u,C,D)$, $f \Join' g$.\footnote{The minimal expected utility level induced by $f$ over $C$ is $25 \times \frac{1}{3}-15\times \frac{2}{3}\approx-1.67$, and the maximal one over $D$ is $25 \times (\frac{1}{2}+0.02)-15\times (\frac{1}{2}-0.02)=5.8$. The minimal expected utility level induced by $g$ over $C$ is $17 \times \frac{1}{2}-15\times \frac{1}{2}=1$, and the maximal one over $D$ is $17 \times (\frac{2}{3}-0.02)-15 \times (\frac{1}{3}+0.02)\approx 5.69$.} Thus, in contrast to the concordant case above, given these potential rewards, the DM does not prefer betting 30\$ against her favorite team compared to betting 30\$ in favor of it.\footnote{Note that the incomparability of $f$ and $g$ would not be obtained with the concordant preference defined using the convex hull of $C$ and $D$. Let $\Tilde{C}$ denote this convex hull. Then, the value associated with $S_{\{A \text{ wins}\}}$ across measures in $\tilde{C}$ browses $[\frac{1}{3},\frac{1}{2}+0.02]$. The minimal and maximal expected utility levels induced by $f$ over $\tilde C$ are $-1.67$ and $5.8$, respectively, while the minimal and maximal expected utility levels induced by $g$ are $0.36$ and $6.33$, respectively. Thus, according to the hope-and-prepare preference relation $\tilde{\succsim}$, with representation $(u,\tilde{C}, \tilde{C})$, $g \tilde{\succsim} f$.} Of course, sufficiently high monetary rewards in case team B wins can compensate the reluctance of the DM to bet against her favorite team when her preference is $\succsim'$, as can be checked by replacing $34\$$ by $42\$$ in the definition of $g$.
\end{quote}

\normalsize
\medskip
\noindent
\textbf{Best and worst outcomes in the lab.} The phenomenon that, in general, individuals overweight extreme outcomes, when they face known probabilities, as well as when they face ambiguous ones, is well documented. A large body of experimental studies---surveyed in \cite{TK2015}---have highlighted the presence of such \textit{insensitivity} among student subjects, and subsequent research, conducted among the general population, has confirmed its importance \citep{dimmock2016ambiguity,watanabe2024ambiguity}.\footnote{The reader can find a thorough account of the numerous studies pointing out insensitivity in the Online Appendix D.2 of \cite{baillon2025source}.} The idea that options should be compared \textit{only} based on extreme outcomes corresponds to complete insensitivity.\footnote{Our analysis models ambiguity rather than risk, as the DM has multiple priors. Despite this difference, the psychological mechanisms behind the two approaches are closely related, and findings obtained for outcomes are relevant for models involving multiple priors. The connection between them can be made explicit as follows---we are grateful to Peter Wakker for pointing out this connection. \textit{For a given act}, the DM can associate each prior with the induced expected utility (EU) of the act. Assume that the DM computes an \textit{EU-certainty-equivalent} (EU-CE) of each prior, given this act, that is, an outcome yielding a utility level equal to the EU level induced by the prior. Each act is now associated with a set of certainty-equivalents, one for each prior, and acts are compared on the basis of these sets. In other words, each act induces a \textit{mapping from priors to their EU-CE}. And the highest and lowest expected utility levels according to which the act is evaluated with hope-and-prepare preferences are equal to the highest and lowest values taken by the corresponding mapping.}

\section{Representation results}\label{sec:characterization}
In this section, we present the main representation results. We first axiomatize hope-and-prepare preferences involving different sets of priors. We then identify two additional axioms, original to this work, under which the two sets of priors coincide.

\subsection{Characterization of hope-and-prepare preferences}

We now proceed to an axiomatic characterization of hope-and-prepare preferences based on the following six axioms. Axioms \ref{axiomclassic},\ref{axiomcont}, \ref{axiomcertain}, and \ref{axiommono} express standard requirements, Axiom \ref{axiomconv} was proposed in \cite{E2022}, and Axiom \ref{axiomtransiincomp_new} is original to this work.

\begin{axiom}
\label{axiomclassic}
The restriction of $\succsim$ to $X$ is a non-trivial complete preorder.\footnote{That is, for all $x, y \in X$, $x \succsim y$ or $y \succsim x$; and there exist $x, y \in X$ with $x \succ y$.}
\end{axiom}

\begin{axiom}
\label{axiomcont}
For all $(f,g,h) \in \mathcal{F}^3$, the sets $\{\alpha\in [0,1]:\alpha f+(1-\alpha)g \succsim h\}$ and $\{\alpha\in [0,1]:h \succsim \alpha f+(1-\alpha)g\}$ are closed.
\end{axiom}

\begin{axiom}
\label{axiomcertain}
For all $f,g\in \mathcal{F}$, $x\in X$, and $\alpha\in (0,1)$, $f\succsim g$ if and only if $\alpha f+(1-\alpha)x \succsim \alpha g+(1-\alpha)x$.
\end{axiom}

The relation $\succsim$ is a preorder that is complete on constant acts; the asymmetric part $\succ$ and the symmetric part $\sim$ are derived in the usual way. Axiom \ref{axiomcont} is the standard Archimedean continuity condition adopted in models of decision under uncertainty. Axiom \ref{axiomcertain} is the certainty independence axiom proposed by \cite{GS1989} in their seminal paper as a weakening of the independence axiom at play in the characterization of subjective expected utility.

\begin{axiom}\label{axiomconv}
For all $x\in X$, the sets $\{f\in \mathcal{F}:f\succsim x\}$ and $\{f\in \mathcal{F}: x\succsim f\}$ are convex.
\end{axiom}

Axiom \ref{axiomconv} states that comparisons to a given constant act should not be sensitive to hedging. The convexity of $\{f\in \mathcal{F}:f\succsim x\}$ reflects uncertainty aversion: an act obtained through hedging between two acts that are at least as desirable as the constant act $x$ is also at least as desirable as $x$. The convexity of $\{f\in \mathcal{F}: x\succsim f\}$ reflects preference for uncertainty: when the DM considers two uncertain acts at most as desirable as $x$, an act obtained through hedging between the two is also at most as desirable as $x$.

We require in Axiom \ref{axiommono} that $\succsim$ be consistent with state-wise dominance:

\begin{axiom}
\label{axiommono} For all $f,g\in \mathcal{F}$, if, for all $s \in S$, $f(s) \succsim g(s)$, then $f \succsim g$.
\end{axiom}

In Axiom \ref{axiomtransiincomp_new}, we impose a relatively weak sufficient condition for comparability. 

\begin{axiom}\label{axiomtransiincomp_new}
For all $f, g \in \mathcal{F}$ and $x,y \in X$: if $f\Join x$, $x \succ g$, $g\Join y$, and $f\succ y$, then $f \succ g$. 
\end{axiom}

While the DM cannot compare $f$ to the constant $x$, she declares $x$ strictly more desirable than $g$. On the other hand, while she cannot compare $g$ to the constant act $y$, she declares $f$ strictly more desirable than $y$. Axiom \ref{axiomtransiincomp_new} implies that \textit{in the presence of such consonant conclusions as to the comparison of $f$ and $g$, the DM considers $f$, with sufficient conviction, strictly more desirable than $g$}.

As we already highlighted, given the complexity involved in the evaluation of uncertain acts, constant acts, which are the simplest acts, involving no ambiguity, are likely to be used as comparison devices. Then, a straightforward way to use them in comparing two acts, when preferences may be incomplete, consists in looking for a constant act that is incomparable to one of them and comparable to the other one. Each such constant act then provides \textit{a piece of evidence} as to the comparison between the two uncertain acts---the question is then to determine what are sufficient pieces of evidence.  
 
According to Axiom \ref{axiomtransiincomp_new}, two consonant pieces of evidence suffice: if there is a constant act $x$ that is incomparable to $f$ and more desirable than $g$, and a constant act $y$ that is incomparable to $g$ and less desirable than $f$, then both pieces of evidence favor $f$, and the DM declares $f$ more desirable than $g$.
 
There is a sense in which Axiom \ref{axiomtransiincomp_new} expresses, for an incomplete preference relation, a minimal departure from the completeness of weak orders for which all acts admit a \textit{certainty equivalent}.\footnote{That is, binary relations $\succsim$ which are reflexive, transitive and complete, such that, for all $f \in \mathcal{F}$, there is $x \in X$ such that $f \sim x$.} For these weak orders, one piece of evidence is sufficient: if $f \in \mathcal{F}$ has a certainty equivalent $x \in X$ and $x$ is strictly preferred to $g \in \mathcal{F}$, then $f$ is strictly preferred to $g$. For an incomplete preference relation, Axiom \ref{axiomtransiincomp_new} involves no more than one piece of evidence based on a constant act incomparable to $f$ \textit{and} one piece of evidence based on a constant act incomparable to $g$. 
\medskip

We sometimes refer to the classical Axioms \ref{axiomcont}, \ref{axiomcertain} and \ref{axiommono} as
continuity, certainty independence and monotonicity.

\begin{theorem}\label{theo:characterization}
A binary relation $\succsim$ satisfies Axioms \ref{axiomclassic}-\ref{axiomtransiincomp_new} if and only if there exist
\begin{itemize}
    \item [$\color{black} \bullet$] a non-constant affine function $u:X \to \mathbb{R}$, unique up to positive affine transformation,
    \item [$\color{black} \bullet$] a unique pair $(C,D)$ of non-disjoint convex compact subsets of $\Delta$,
\end{itemize} such that, for all $f,g \in \mathcal{F},$ 
\begin{align*}
f \succsim g \Leftrightarrow \begin{cases}
\min_{p\in C} \int u(f) dp \geq \min_{p\in C} \int u(g) dp \\[4pt]
\max_{p\in D} \int u(f) dp \geq \max_{p\in D} \int u(g) dp
\end{cases},
\end{align*}
that is, $\succsim$ admits the hope-and-prepare representation $(u,C,D)$, where $C$ and $D$ are unique, and $u$ is unique up to positive affine transformation.
\end{theorem}

In the proof of Theorem \ref{theo:characterization}, we define two relations on the bases of $\succsim$: for all $f, g \in \mathcal{F}$,
\begin{align*}
    f \succsim_p g &\iff \text{for all } x \in X,  g \succsim x \text{ implies } f \succsim x, \\
    f \succsim_o g &\iff \text{for all } x \in X,  x \succsim f \text{ implies } x \succsim g.
\end{align*}
These relations capture the \textit{pessimistic} and \textit{optimistic} assessments underlying $\succsim$: $f \succsim_p g$ whenever $f$ passes every threshold that $g$ passes, and $f \succsim_o g$ whenever every constant act that is at least as desirable as $f$ is also at least as desirable as $g$.

\color{black}

\subsection{Characterization of concordant hope-and-prepare preferences}

We now identify conditions under which $C = D$ in representation (\ref{introrep}). Under certainty independence (Axiom \ref{axiomcertain}), comparisons between acts are preserved under mixtures with constant acts. The following two axioms impose that comparisons between an act and \textit{a constant act} be preserved under mixtures with \textit{any} act.\bigskip

\begin{axiom}\label{axiompess}
For all $f\in \mathcal{F}$, $x \in X$, $h \in \mathcal{F}$, and $\alpha \in (0,1]$, $f \succsim x$ implies $\alpha f + (1-\alpha)h \succsim \alpha x + (1-\alpha) h.$
\end{axiom}
 
\begin{axiom}\label{axiomopt}
For all $f\in \mathcal{F}$, $x \in X$, $h \in \mathcal{F}$, and $\alpha \in (0,1]$, $x \succsim f$ implies $\alpha x + (1-\alpha)h \succsim \alpha f + (1-\alpha) h.$
\end{axiom}

Axiom \ref{axiompess} states that being at least as desirable as a constant act reflects an unambiguously favorable assessment: it cannot be overturned by mixing with any act. Axiom \ref{axiomopt} states symmetrically that being at most as desirable as a constant act reflects an unambiguously unfavorable assessment. 

For a hope-and-prepare preference relation, the two axioms ensure that the pessimistic and optimistic evaluations are computed on the same set of scenarios.\footnote{We thank an anonymous referee who suggested adopting these two axioms.}

\color{black}

\begin{theorem}\label{propconcor} The following statements hold:
\begin{enumerate}[(i)]
    \item A hope-and-prepare preference $\succsim$, with unique representation $(u,C,D)$, satisfies 
    Axiom \ref{axiompess} if and only if $D \subseteq C$.
    \item A hope-and-prepare preference $\succsim$, with unique representation $(u,C,D)$, satisfies  
    Axiom \ref{axiomopt} if and only if $C \subseteq D$.
\end{enumerate}
In particular, a binary relation $\succsim$ satisfies Axioms \ref{axiomclassic}-\ref{axiomopt} if and only if there exist

\begin{itemize}
    \item a non-constant affine function $u:X \to \mathbb{R}$, unique up to positive affine transformation,
    \item a unique convex compact subset of $\Delta$, denoted $C$,
such that, for all $f,g \in \mathcal{F}$,
\end{itemize}  
\begin{align*}
f \succsim g \iff \begin{cases}
\min_{p\in C} \int u(f) dp \geq \min_{p\in C} \int u(g) dp \\[4pt]
\max_{p\in C} \int u(f) dp \geq \max_{p\in C} \int u(g) dp
\end{cases}.
\end{align*}
\end{theorem}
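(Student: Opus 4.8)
The plan is to prove parts (i) and (ii) separately ---they are mirror images of one another--- and then to deduce the concordant representation by combining them. I would begin by exploiting the structure of complementary acts. If $f$ and $g$ are complementary and $u$ is the affine utility from the representation $(u,C,D)$, then $\frac{1}{2}u(f(s))+\frac{1}{2}u(g(s))$ equals a constant $k$ for all $s$, so writing $\psi(s):=u(f(s))-k$ gives $u(f(s))=k+\psi(s)$ and $u(g(s))=k-\psi(s)$, while the mixture $h:=\frac{1}{2}f+\frac{1}{2}g$ has constant utility $k$. Substituting into the criterion of Definition \ref{defunanimrep} and using that probability measures integrate constants to themselves, I would reduce the two comparisons to sign conditions on $\psi$: namely $f\succ h \iff [\min_{p\in C}\int\psi\,dp>0 \text{ and } \max_{p\in D}\int\psi\,dp>0]$, whereas $h\succ g \iff [\max_{p\in C}\int\psi\,dp>0 \text{ and } \min_{p\in D}\int\psi\,dp>0]$.

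The second preliminary observation is that, as $(f,g)$ ranges over all complementary pairs, the associated $\psi$ ranges ---up to multiplication by an arbitrary small positive scalar, which affects none of the above sign conditions--- over all simple measurable functions $S\to\mathbb{R}$. Indeed, since $u$ is non-constant and $X$ convex, $u(X)$ is a nondegenerate interval; fixing $k$ in its interior and rescaling any candidate $\psi$ by a small $\lambda>0$ keeps $k\pm\lambda\psi(s)$ inside $u(X)$ for the finitely many values of a simple $\psi$, whence one recovers complementary acts with $u(f(s))=k+\lambda\psi(s)$ and $u(g(s))=k-\lambda\psi(s)$. Thus Axiom \ref{axiompess} is equivalent to the analytic implication that, for every simple $\psi$, $[\min_{p\in C}\int\psi\,dp>0 \text{ and } \max_{p\in D}\int\psi\,dp>0]$ forces $[\max_{p\in C}\int\psi\,dp>0 \text{ and } \min_{p\in D}\int\psi\,dp>0]$, and Axiom \ref{axiomopt} to the analogous implication with the two conclusions interchanged.

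For part (i), the ``if'' direction is immediate from monotonicity of the extrema under inclusion: when $D\subseteq C$ one has $\min_{p\in D}\int\psi\,dp\ge\min_{p\in C}\int\psi\,dp$ and $\max_{p\in C}\int\psi\,dp\ge\min_{p\in C}\int\psi\,dp$, so the premise yields the conclusion. For the ``only if'' direction I would argue by contraposition: assuming some $q\in D\setminus C$, the separating hyperplane theorem in $\Delta$ with its weak* topology gives a bounded ---hence, by uniform approximation, simple--- $\psi$ with $\int\psi\,dq<\min_{p\in C}\int\psi\,dp$. Translating $\psi$ by a constant (which shifts every expectation equally) places $\int\psi\,dq$ strictly below $0$ while keeping $\min_{p\in C}\int\psi\,dp>0$. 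The remaining premise $\max_{p\in D}\int\psi\,dp>0$ is then secured precisely by the standing hypothesis $C\cap D\neq\emptyset$: any common measure $r$ satisfies $\int\psi\,dr\ge\min_{p\in C}\int\psi\,dp>0$, so $\max_{p\in D}\int\psi\,dp>0$. Since $q\in D$ forces $\min_{p\in D}\int\psi\,dp<0$, the implication fails, contradicting Axiom \ref{axiompess}. Part (ii) follows by the mirror argument, separating a point $q\in C\setminus D$ from $D$ and again invoking $C\cap D\neq\emptyset$ to guarantee the remaining premise. Combining (i) and (ii), a relation satisfying Axioms \ref{axiomclassic}--\ref{axiomopt} has, by Theorem \ref{theorem1}, a hope-and-prepare representation $(u,C,D)$ with $D\subseteq C$ and $C\subseteq D$, hence $C=D$; the converse is immediate from (i), (ii) and the necessity direction of Theorem \ref{theorem1}.

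The step I expect to require the most care is the ``only if'' direction, specifically the combination of the separating-functional construction with the translation and rescaling needed both to turn an abstract separating functional into a genuine simple $\psi$ whose induced complementary acts stay within $u(X)$, and to arrange the three sign conditions simultaneously ---in particular the subtle use of $C\cap D\neq\emptyset$ to make the premise of the implication hold while its conclusion fails.
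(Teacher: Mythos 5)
Your proposal is correct and takes essentially the same route as the paper's proof: both reduce Axioms \ref{axiompess} and \ref{axiomopt}, via complementary acts written as $u(f)=k+\psi$, $u(g)=k-\psi$, to sign implications on the expectations of $\psi$, establish the \emph{if} direction by monotonicity of $\min$/$\max$ under the inclusion $D\subseteq C$ (resp.\ $C\subseteq D$), and establish the \emph{only-if} direction by contraposition, separating a prior in $D\setminus C$ (resp.\ $C\setminus D$) by a simple function and invoking $C\cap D\neq\emptyset$ exactly as you do to secure the remaining premise before concluding the concordant characterization from Theorem \ref{theorem1}. Your explicit realizability step (translating and rescaling $\psi$ so that $k\pm\lambda\psi(s)$ stays in $u(X)$) is precisely the paper's ``scaling $\psi$ and $k$ appropriately,'' so the two arguments differ only in presentation.
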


When $\succsim$ admits a concordant representation, acts are evaluated on the basis of the interval of all expected utility levels that they induce across scenarios in a given set, according to the strong set order. An act $f$ is at least as desirable as $g$ if and only if there exists a level attainable from $f$ that is at least as high as every level attainable from $g$, and a level attainable from $g$ that is at most as high as every level attainable from $f$.

We note that the two additional axioms imposed in our characterization of concordant hope-and-prepare preferences are, arguably, simpler to interpret than the additional axioms imposed in \cite{E2022} in their characterization of concordant twofold preferences, which involve \textit{complementary acts} (\cite{S2009}).


\section{Hope-and-prepare preferences and the $\alpha$-MEU model}\label{sec:extension}

In this section, we explore the relationship between hope-and-prepare preferences and the (asymetric) $\alpha$-MEU model. 

\subsection{Invariant biseparable completion}

We focus on \textit{invariant biseparable} complete extensions of hope-and-prepare preferences . A complete preorder $\succsim^*$ on $\mathcal{F}$ is invariant biseparable in the sense of \cite{GM2004} if it satisfies Axioms \ref{axiomcont}, \ref{axiomcertain} and \ref{axiommono}. This class of preferences includes important models of complete preferences under uncertainty such as subjective expected utility, maxmin expected utility, and the $\alpha$-MEU model.

\begin{definition}\label{defalphamaxmin}
A preference relation $\succsim$ on $\mathcal{F}$ admits an \textit{asymmetric $\alpha$-MEU} representation if there exist $\alpha\in [0,1]$, two non-disjoint compact convex subsets $C$ and $D$ of $\Delta$, and a non-constant affine function $u:X \to \mathbb{R}$ such that for all $f,g \in \mathcal{F}$,
\begin{align*}
   f \succsim g \iff \ &\alpha \min_{p\in C}\int u(f)dp + (1-\alpha) \max_{p\in D} \int u(f) dp \\
   &\geq \alpha \min_{p\in C}\int u(g)dp + (1-\alpha) \max_{p\in D} \int u(g) dp.
\end{align*}
We refer to such representation as a $(u,C,D,\alpha)$ representation.
\end{definition}

Remarkably, \cite{CE2022} show that the asymmetric $\alpha$-MEU, while retaining the tractability property of the standard $\alpha$-MEU, is flexible enough to accommodate ambiguity aversion for \textit{large/moderate-likelihood} events but ambiguity seeking for \textit{small-likelihood events} and \textit{source-dependent} ambiguity attitudes.

Standard $\alpha$-MEU criteria are obtained if $C=D$ in Definition \ref{defalphamaxmin}, and the following result, as a particular case, characterizes them as invariant biseparable extensions of concordant hope-and-prepare preferences.

\begin{theorem}\label{theorem-extention}
The following conditions are equivalent when $\succsim$ is a hope-and-prepare preference with unique representation $(u,C,D)$:
\begin{enumerate}[(i)]
    \item $\succsim^*$ is an invariant biseparable preference and an extension of $\succsim$.
    \item $\succsim^*$ admits an asymmetric $\alpha$-maxmin expected utility representation $(u,C,D,\alpha)$ in which $\alpha$ is unique whenever $\succsim$ is not complete. 
\end{enumerate}
\end{theorem}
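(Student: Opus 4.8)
The plan is to prove the two implications separately. Throughout I would write $\underline V(f)=\min_{p\in C}\int u(f)\,dp$ and $\overline V(f)=\max_{p\in D}\int u(f)\,dp$, so that the hope-and-prepare representation reads $f\succ g\iff \underline V(f)>\underline V(g)\text{ and }\overline V(f)>\overline V(g)$; since $C\cap D\neq\emptyset$, fixing $p_0\in C\cap D$ gives $\underline V(f)\le\int u(f)\,dp_0\le\overline V(f)$ for every $f$. The direction \emph{(ii)}$\Rightarrow$\emph{(i)} is routine, while \emph{(i)}$\Rightarrow$\emph{(ii)} contains the real work.

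For \emph{(ii)}$\Rightarrow$\emph{(i)}, the functional $V^*(f)=\alpha\underline V(f)+(1-\alpha)\overline V(f)$ represents $\succ^*$, so $\succ^*$ is automatically asymmetric, complete and negatively transitive, and only the three defining axioms remain. Because $u$ is affine, $\underline V$ and $\overline V$ are constant-linear, i.e. $\underline V(\lambda f+(1-\lambda)x)=\lambda\underline V(f)+(1-\lambda)u(x)$ and likewise for $\overline V$; hence so is $V^*$, which yields certainty independence (Axiom \ref{axiomcertain}) directly and, with the continuity of $\min$ and $\max$ over the compact sets $C,D$, the Archimedean property (Axiom \ref{axiomcont}). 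For monotonicity (Axiom \ref{axiommono}), if $f(s)\succ^* g(s)$ for all $s$ then $u(f(s))-u(g(s))$ takes finitely many positive values, hence is bounded below by some $\varepsilon>0$, so $\int u(f)\,dp-\int u(g)\,dp\ge\varepsilon$ for all $p$, giving $\underline V(f)>\underline V(g)$, $\overline V(f)>\overline V(g)$ and thus $V^*(f)>V^*(g)$. Finally $\succ^*$ extends $\succ$: if $f\succ g$ then both terms of $V^*$ weakly increase with at least one strict, so $V^*(f)>V^*(g)$.

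For \emph{(i)}$\Rightarrow$\emph{(ii)}, since $\succ^*$ is invariant biseparable, \cite{GM2004} supplies a non-constant affine $u^*$ and a normalized, monotone, constant-linear functional $I$ with $f\succ^* g\iff I(u^*\circ f)>I(u^*\circ g)$. As $\succ^*$ extends $\succ$ and both restrict to complete orders on $X$ represented by affine functions with $\succ|_X\subseteq\succ^*|_X$, a standard argument gives $u^*=\gamma u+\delta$ with $\gamma>0$; normalizing I take $u^*=u$ and write $V^*=I(u\circ\cdot)$. Comparing $f$ with constants yields the sandwich $\underline V(f)\le V^*(f)\le\overline V(f)$: if $u(x)<\underline V(f)$ then $u(x)<\underline V(f)\le\overline V(f)$, so $f\succ x$ and $V^*(f)>u(x)$, and letting $u(x)\uparrow\underline V(f)$ gives $V^*(f)\ge\underline V(f)$, the upper bound being symmetric. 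I expect the constancy of the weight to be the main obstacle, since the sandwich alone does not pin down an act-independent $\alpha$: a monotone constant-linear functional trapped between $\min_C$ and $\max_D$ may, a priori, use different weights on acts of opposite ``ambiguity shape''. The resolution is to show that $V^*$ factors through $\Phi(f):=(\underline V(f),\overline V(f))$. Indeed, if $\Phi(f)=\Phi(g)$, then for a constant $x$ with $u(x)>\overline V(f)$ and small $\delta>0$ the act $f_\delta=(1-\delta)f+\delta x$ satisfies $\underline V(f_\delta)>\underline V(g)$ and $\overline V(f_\delta)>\overline V(g)$, so $f_\delta\succ g$ and $V^*(f_\delta)>V^*(g)$; letting $\delta\downarrow0$ and using constant-linearity, $V^*(f_\delta)=(1-\delta)V^*(f)+\delta u(x)\to V^*(f)$, whence $V^*(f)\ge V^*(g)$, and the symmetric argument gives equality.

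Thus $V^*=W\circ\Phi$ for a function $W$ on the range of $\Phi$ that is strictly increasing for the strict product order (extension property) and inherits constant-linearity from $V^*$. Passing to the utility-act space $B_0(\Sigma)$, where $I$, $\underline V$ and $\overline V$ are positively homogeneous and constant-additive over all of $\mathbb{R}$ and the range of $\Phi$ contains $\{(a,b):a<b\}$ as soon as $\succ$ is not complete, the ratio $\alpha(a,b)=(b-W(a,b))/(b-a)\in[0,1]$ is, by constant-linearity, invariant under every map $(a,b)\mapsto(\lambda a+(1-\lambda)c,\lambda b+(1-\lambda)c)$; since any two points with $a<b$ are related by such a map (taking $\lambda$ equal to the ratio of the gaps and $c\in\mathbb{R}$ accordingly), $\alpha$ is a single constant and $V^*(f)=\alpha\underline V(f)+(1-\alpha)\overline V(f)$, which is the asymmetric $\alpha$-MEU representation $(u,C,D,\alpha)$ of Definition \ref{defalphamaxmin}. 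For uniqueness, $\succ$ is complete exactly when $\underline V=\overline V$ identically (otherwise any $f$ with $\underline V(f)<\overline V(f)$ is incomparable to a suitable constant); when $\succ$ is not complete, fixing such an $f$ forces $\alpha=(\overline V(f)-V^*(f))/(\overline V(f)-\underline V(f))$, so $\alpha$ is unique, whereas when $\succ$ is complete $V^*=\underline V=\overline V$ for every $\alpha$. This route parallels the treatment of Bewley preferences in \cite{FRY2022} within the \cite{GMMS2010} framework, the essential new point being the factoring step, which is exactly where the full strength of the extension property — beyond comparisons with constants, which yield only the sandwich — is used.
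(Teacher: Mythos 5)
Your proposal is correct and follows essentially the same route as the paper: the representation of $\succ^*$ via \cite{GM2004} (the paper's Lemma \ref{lemma:ibp}), the identification $u^*=u$ through the restriction to constants, and then the proof that a monotonic constant-linear functional respecting strict dominance in both $\min_{p\in C}$ and $\max_{p\in D}$ must be an act-independent convex combination of the two---your factoring-plus-affine-invariance argument is exactly the content of the paper's Lemma \ref{lemma:alphamaxmin}, adapted from Lemma A.3 of \cite{FRY2022}, with your sandwich yielding $\alpha\in[0,1]$ in place of the paper's contradiction argument, and with the same uniqueness step. One small point of care: carry out the perturbation establishing $V^*=W\circ\Phi$ in $B_{0}(\Sigma)$ (after transferring the strict-strict-implies-strict property there by the joint rescaling you already invoke, as the paper does), rather than at the level of acts, since when $u(X)$ is a bounded closed interval and $\Phi(f)=\Phi(g)$ attains both endpoints there is no constant act $x$ with $u(x)>\overline{V}(f)$ and none with $u(y)<\underline{V}(g)$, so the act-level version of your argument has an unhandled edge case that the $B_{0}(\Sigma)$ version removes for free.
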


Beyond the fact that we consider the completion of a new type of preferences, our characterization has three salient features compared to other inquiries about the $\alpha$-MEU criterion: the representation is asymmetric in general, $\alpha$ does not depend on the considered acts, and is essentially unique.

\subsection{Recovering the hope-and-prepare preference from its completion}
 
We focus on concordant hope-and-prepare preferences, whose invariant biseparable completions are standard $\alpha$-MEU preferences (Theorem \ref{theorem-extention}).\footnote{For a non-concordant hope-and-prepare preference the completion, is an asymmetric $\alpha$-MEU preference (Definition \ref{defalphamaxmin} with $C \neq D$). The construction below relies on a common set of priors and does not extend to this case, which we leave open.} Throughout this subsection, $\unrhd$ denotes a standard $\alpha$-MEU preference with representation $(u, C, \alpha)$ for some $\alpha \in [0,1]$. We ask the converse of the extension theorem: given $\unrhd$, can the hope-and-prepare preference it completes be expressed in terms of $\unrhd$ itself? 

The standard $\alpha$-MEU representation is not unique: distinct pairs $(C, \alpha)$ can induce the same ranking \citep{FRY2022, HARTMANN2023105719}. However, for a given $\alpha \neq 1/2$, the set of priors in the $\alpha$-MEU representation is unique \citep[Theorem 1]{HARTMANN2023105719}. Therefore, as in \citet{HARTMANN2023105719}, we fix the value of $\alpha$, assumed to belong to $(0,1) \setminus \{1/2\}$.\footnote{When $\alpha = 1/2$, there might exist a set of priors $C' \neq C$ such that the $\alpha$-MEU representation $(u,C',\alpha)$ represents $\unrhd$.} 
 
Recall that two acts $f$ and $\bar{f}$ are \textit{complementary} if their equal-weight mixture is equivalent to a constant act: $\frac{1}{2}f(s) + \frac{1}{2}\bar{f}(s) \sim \frac{1}{2}f(s') + \frac{1}{2}\bar{f}(s')$ for all $s, s' \in S$ \citep{S2009}. Given $\unrhd$, define the \textit{dual preference} $\unrhd_d$ in the following way: for all $f, g \in \mathcal{F}$,
$$f \unrhd_d g \iff \bar{g} \unrhd \bar{f},$$
where $\bar{f}$ and $\bar{g}$ are complementary to $f$ and $g$ respectively, with $\frac{1}{2}f + \frac{1}{2}\bar{f} \sim x \sim \frac{1}{2}g + \frac{1}{2}\bar{g}$ for some constant act $x$.\footnote{It is easy to prove that such a constant act always exists. 
} 

The next result shows that $\unrhd_d$ is well-defined.
 
\begin{lemma}\label{lemma:dual}
Let $\unrhd$ admit a standard $\alpha$-MEU representation $(u, C, \alpha)$. Then $\unrhd_d$ is represented by the mapping $V_d: \mathcal{F} \to \mathbb{R}$ defined by
$$V_d(h) = (1-\alpha) \min_{p \in C} \int u(h)dp + \alpha \max_{p \in C} \int u(h)dp.$$
In particular, for all $f,g \in \mathcal{F}$, $f \unrhd_d g$ is independent of the choice of complementary acts $\bar{f}, \bar{g}$ and of the constant act $x$.
\end{lemma}
 
By Lemma \ref{lemma:dual}, $\unrhd_d$ is a $(1-\alpha)$-MEU preference. For each act $f$, let $c_f$ denote the certainty equivalent of $f$ under $\unrhd$ and denote $z_f$ the certainty equivalent of $f$ under $\unrhd_d$. 
 
\begin{proposition}\label{prop:recovery}
Let $\unrhd$ admit a standard $\alpha$-MEU representation $(u, C, \alpha)$ with $\alpha \in \left(\frac{1}{2}, 1\right)$, and let $\unrhd_d$ be its dual preference. Let $\succsim^\circ$ be a concordant hope-and-prepare preference whose invariant biseparable completion is $\unrhd$. Then, $\succsim^\circ$ is unique and, for all $f, g \in \mathcal{F}$, 
$$f \succsim^\circ g \iff \begin{cases} \alpha \cdot c_f + (1-\alpha) \cdot z_g \unrhd \alpha \cdot c_g + (1-\alpha) \cdot z_f \\[4pt] \alpha \cdot z_f + (1-\alpha) \cdot c_g \unrhd \alpha \cdot z_g + (1-\alpha) \cdot c_f \end{cases}.$$
\end{proposition}
 
The case $\alpha \in (0, 1/2)$ is symmetric, with the roles of $\unrhd$ and $\unrhd_d$ exchanged. 
 
The act $f$ is summarized by two certainty equivalents, $c_f$ under $\unrhd$ and $z_f$ under $\unrhd_d$. The $\alpha$-MEU evaluation underlying $c_f$ weights the worst-case value by $\alpha$ and the best-case value by $1-\alpha$, and the dual evaluation underlying $z_f$ reverses these weights. Comparing $c_f$ with $c_g$ reveals only the $\alpha$-MEU ranking, not the separate worst-case and best-case rankings. These are recovered by mixing the certainty equivalents \textit{across} acts. That is, comparing $\alpha \cdot c_f + (1-\alpha) \cdot z_g$ with $\alpha \cdot c_g + (1-\alpha) \cdot z_f$ cancels the best-case terms and isolates the worst-case ranking. Symmetrically, comparing $\alpha \cdot z_f + (1-\alpha) \cdot c_g$ with $\alpha \cdot z_g + (1-\alpha) \cdot c_f$ isolates the best-case ranking.
 
Proposition \ref{prop:recovery} takes the value of $\alpha$ as given, and each possible value of $\alpha$ is associated with a specific set of priors in the $\alpha$-MEU representation. Thus, the construction in Proposition \ref{prop:recovery} expresses, for each $\alpha$, the hope-and-prepare preference with the corresponding set of priors, rather than deriving a single hope-and-prepare preference simply from $\unrhd$. It would not be possible to do the latter because $\unrhd$ may be consistent with different pairs $(\alpha, C)$ and $(\alpha', C')$.\footnote{More precisely, as shown in \cite{HARTMANN2023105719} (Theorem 2), the values of $\alpha$ that are compatible with $\unrhd$ form an interval, and the associated sets of priors are nested.}

\color{black}

\section{Comparison of incomplete criteria}\label{sec:comp}

\subsection{Degree of incompleteness}\label{subsec:comp}

We have stated that with hope-and-prepare preferences, in comparison to Bewley preferences and twofold preferences, the
trade-off between decisiveness and conviction is addressed in a way that is more favorable to decisiveness. The criterion we use to determine whether a binary relation is more conservative than an other one pertains to their respective \textit{degree of incompleteness}. 

\begin{definition}\label{moreconserv}
Given two preference relations $\succsim_1$ and $\succsim_2$ on $\mathcal{F}$, we say that $\succsim_1$ is more conservative than $\succsim_2$ if $\succsim_2$ is an extension of $\succsim_1$, that is, for all $f,g \in \mathcal{F}$,
$$f \succsim_1 g \text{ implies } f\succsim_2 g.$$
\end{definition}

The next proposition identifies necessary and sufficient conditions under which a hope-and-prepare preference relation is an extension of a Bewley or of a twofold preference relation.

\begin{proposition}\label{prop_comp_Bew_Two} Let $\succsim_{HP}$, $\succsim_T$, and $\succsim_B$ be, respectively, a hope-and-prepare preference with unique representation $(u,C_{HP},D_{HP})$, a twofold preference with unique representation $(u,C_T,D_T)$, and a Bewley preference with unique representation $(u,C_B)$, sharing the same utility function $u$. Then,
\begin{enumerate}
\item[(i)] $\succsim_B$ is more conservative than $\succsim_{HP}$ if and only if $C_{HP} \cup D_{HP} \subseteq C_B$;
\item[(ii)] $\succsim_T$ is more conservative than $\succsim_{HP}$ if and only if $C_{HP} \subseteq C_T$ and $D_{HP} \subseteq D_T$.
\end{enumerate} 
\end{proposition}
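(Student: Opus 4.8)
The plan is to prove each of the two biconditionals by relating the relevant multi-prior functionals. Recall that for a compact convex $A \subseteq \Delta$ and an act $h$, the map $A \mapsto \min_{p \in A} \int u(h)\,dp$ is the support-type functional that determines the relations, and that two such compact convex sets coincide if and only if their associated minimum (equivalently maximum) functionals coincide on all acts; this is a standard consequence of the separating hyperplane theorem, and I would invoke it freely. Throughout I write $m_A(h) = \min_{p\in A}\int u(h)\,dp$ and $M_A(h) = \max_{p\in A}\int u(h)\,dp$, noting $M_A(h) = -m_A(-h)$ in utility terms.

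For part (i), the goal is $\succ_B$ more conservative than $\succ_H$ $\iff$ $C_H \cup D_H \subseteq C_B$. For the ``if'' direction, suppose $C_H \cup D_H \subseteq C_B$ and $f \succ_B g$, i.e.\ $\int u(f)\,dp > \int u(g)\,dp$ for all $p \in C_B$. Since $C_H \subseteq C_B$, this strict inequality holds in particular at the minimizer of $\int u(g)\,dp$ over $C_H$ and at every point of $C_H$, so $m_{C_H}(f) > m_{C_H}(g)$; the same argument with $D_H \subseteq C_B$ applied to the maximizers gives $M_{D_H}(f) > M_{D_H}(g)$. Hence $f \succ_H g$. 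For the ``only if'' direction I would argue by contraposition: if $C_H \cup D_H \not\subseteq C_B$, pick a point $q$ in (say) $C_H$ but outside $C_B$; since $C_B$ is compact and convex, separate $q$ from $C_B$ by a hyperplane, which corresponds to a measurable function and hence (by affineness of $u$ and richness of the outcome space, exactly as in the representation arguments used earlier) to a pair of acts $f,g$ with $\int u(f)\,dp > \int u(g)\,dp$ strictly on all of $C_B$ but with the inequality reversed or violated at $q$, so that $f \succ_B g$ while $f \not\succ_H g$. The case $q \in D_H \setminus C_B$ is symmetric, using the max-functional.

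For part (ii), the goal is $\succ_T$ more conservative than $\succ_H$ $\iff$ $C_H \subseteq C_T$ and $D_H \subseteq D_T$. The ``if'' direction is immediate monotonicity of the functionals: $C_H \subseteq C_T$ gives $m_{C_H}(h) \ge m_{C_T}(h)$ and $D_H \subseteq D_T$ gives $M_{D_H}(h) \le M_{D_T}(h)$, so from $f \succ_T g$, namely $m_{C_T}(f) > M_{D_T}(g)$, I can chain
\begin{align*}
m_{C_H}(f) \ge m_{C_T}(f) > M_{D_T}(g) \ge M_{D_H}(g) \ge m_{D_H}(g),
\end{align*}
and likewise $M_{D_H}(f) \ge m_{C_H}(f) > M_{D_T}(g) \ge M_{D_H}(g)$, delivering both inequalities in the definition of $\succ_H$, hence $f \succ_H g$. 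The ``only if'' direction again goes by contraposition and separation: if, say, $C_H \not\subseteq C_T$, take $q \in C_H \setminus C_T$ and separate it from the compact convex $C_T$; the hard part is to turn this separation into acts that are $\succ_T$-comparable yet $\succ_H$-incomparable, because $\succ_T$ requires the stringent gap $m_{C_T}(f) > M_{D_T}(g)$ rather than a mere coordinatewise comparison.

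The main obstacle I anticipate is precisely this ``only if'' construction for part (ii): producing explicit acts witnessing a failure. The natural move is to exploit certainty independence (Axiom~\ref{axiomcertain}) to translate and scale utilities freely, so that I can first build a crude separating pair and then mix with a suitable constant act to inflate the $\succ_T$-gap $m_{C_T}(f) - M_{D_T}(g)$ to be positive while keeping the violation at $q$ intact; the affine structure of $u$ and the convexity/compactness of all the prior sets are what make this calibration possible, and I would lean on the same weak$^*$ separation machinery already used in the sketch of Theorem~\ref{theorem1}. A symmetric argument handles $D_H \not\subseteq D_T$ using the max-functional and a dominating constant act.
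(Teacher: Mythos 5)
Your proposal follows the paper's own strategy almost step for step: the two \textit{if} directions via monotonicity of the $\min$ and $\max$ functionals under set inclusion (together with non-disjointness of the prior sets), and the two \textit{only if} directions by separating a stray prior $q$ from the relevant compact convex set and turning the separating functional, after affine rescaling into $u(X)$, into an act compared against a constant act. Your part (i) is sound, and your pairing of cases is the right one: for $q\in C_H\setminus C_B$ the witness is $f\succ_B x$ with $\min_{p\in C_H}\int u(f)\,dp\leq \int u(f)\,dq< u(x)$ killing the pessimistic condition, and for $q\in D_H\setminus C_B$ it is $x\succ_B f$ with the optimistic condition failing. One precision, though: the second act really must be the \emph{constant} act at the separating level $k$. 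A violation ``at $q$'' between two non-constant acts says nothing, since $\succ_H$ compares minima and maxima of the two acts, not their prior-by-prior expectations; your phrase ``a pair of acts $f,g$ with the inequality reversed or violated at $q$'' only delivers $f\nsucc_H g$ when $g$ is constant.

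Two concrete gaps remain in part (ii). First, in the \textit{if} direction your first chain ends with $M_{D_H}(g)\geq m_{D_H}(g)$, which only yields $m_{C_H}(f)>m_{D_H}(g)$; but the pessimistic inequality defining $f\succ_H g$ is $m_{C_H}(f)>m_{C_H}(g)$. The correct closing step is $M_{D_H}(g)\geq m_{C_H}(g)$, which holds because $C_H\cap D_H\neq\emptyset$ (evaluate $g$ at a common prior) --- the same fact that justifies $M_{D_H}(f)\geq m_{C_H}(f)$ in your second chain. The fix is immediate, but the chain as written does not close the argument. Second, the ``hard part'' you flag in the \textit{only if} direction does not exist, and the repair you sketch (mixing with constant acts via Axiom \ref{axiomcertain} to ``inflate the gap'') is misdirected: the whole argument lives at the level of the given representation $(u,C_T,D_T)$, where acts are constructed directly from the affinity of $u$ and the convexity of $X$, with no appeal to any axiom. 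Take $q\in C_H\setminus C_T$ and separate: after scaling there are $f\in\mathcal{F}$ and a constant act $x$ with
\begin{align*}
\min_{p\in C_T}\int u(f)\,dp \;>\; u(x) \;>\; \int u(f)\,dq.
\end{align*}
Because $x$ is constant, $\max_{p\in D_T}\int u(x)\,dp=u(x)$, so the ``stringent gap'' defining $\succ_T$ \emph{is} the left inequality: $f\succ_T x$. The right inequality gives $\min_{p\in C_H}\int u(f)\,dp<u(x)$, hence $f\nsucc_H x$, and no calibration is ever needed. Symmetrically, for $q\in D_H\setminus D_T$ choose $f$ and constant $x$ with $\int u(f)\,dq>u(x)>\max_{p\in D_T}\int u(f)\,dp$; then $x\succ_T f$ while $\max_{p\in D_H}\int u(f)\,dp>u(x)$ blocks $x\succ_H f$. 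This one-line constant-act observation is exactly how the paper closes the case you left open.
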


\begin{remark}
A direct consequence of this proposition and Proposition 4 in \cite{E2022}  is that if $C_{HP} \cup D_{HP} \subseteq C_B \subseteq C_T \cap D_T$, in particular if $C_{HP} = D_{HP} = C_B = C_T = D_T$, then $\succsim_T$ is more conservative than $\succsim_B$, which is more conservative than $\succsim_{HP}$.
\end{remark}

When comparing two options under ambiguity, a Bewley DM requires \textit{every} scenario to agree on the ranking---a demanding standard that leaves many pairs unranked. A twofold DM goes further: she requires the worst case of one option to beat the best case of the other. A hope-and-prepare DM asks less: she only needs the worst cases to agree among themselves, and the best cases to agree among themselves. By separating the two assessments rather than requiring one to dominate the other, she can resolve more comparisons while still demanding that both perspectives point in the same direction.

Proposition \ref{prop_comp_Bew_Two} identifies the conditions on the sets of priors under which these comparisons hold. In particular, when all three DMs share the same utility and draw on the same set of scenarios, the twofold preference is more conservative than the Bewley preference, which is in turn more conservative than the hope-and-prepare preference: each step gains decisiveness by relaxing what ``agreement across scenarios'' means.

\subsection{Ambiguity attitudes}\label{sectionambatt}

We are able to compare ambiguity attitudes displayed by different hope-and-prepare preferences using the classical comparative statics notions of \cite{GM2002}.

\begin{definition}
Given two preference relations $\succsim_1$ and $\succsim_2$ on $\mathcal{F}$,
\begin{enumerate}[(i)]
    \item $\succsim_1$ is more ambiguity averse than $\succsim_2$ if, for all $f \in \mathcal{F}$ and $x \in X$, $f \succsim_1 x$ implies $f \succsim_2 x$.
    \item $\succsim_1$ is more ambiguity loving than $\succsim_2$ if, for all $f \in \mathcal{F}$ and $x \in X$, $x \succsim_1 f$ implies $x \succsim_2 f$.
\end{enumerate}
\end{definition}

An agent is more ambiguity averse than another if she is less inclined to consider an uncertain act $f$ at least as desirable as a constant act $x$. On the other hand, an agent is more ambiguity loving than another if she is more inclined to consider an uncertain act $f$ at least as desirable as a constant act $x$. The next result characterizes ambiguity attitudes for hope-and-prepare preferences.

\begin{proposition}\label{Prop_comparative_attitude}
Let $\succsim_1$ and $\succsim_2$ be two hope-and-prepare preference relations with unique representations $(u,C_1,D_1)$ and $(u,C_2, D_2)$, respectively. Then,
\begin{enumerate}[(i)]
\item $\succsim_1$ is more ambiguity averse than $\succsim_2$ if and only if $C_2 \subseteq C_1$.
\item $\succsim_1$ is more ambiguity loving than $\succsim_2$ if and only if $D_2 \subseteq D_1$.
\end{enumerate}
\end{proposition}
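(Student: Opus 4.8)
The plan is to prove each equivalence in Proposition \ref{Prop_comparative_attitude} separately, exploiting the fact that the two halves of the hope-and-prepare criterion decouple cleanly: the $C$-component governs worst-case (pessimistic) evaluations and the $D$-component governs best-case (optimistic) evaluations. For part (i), observe that the comparison $f \succ_i x$ with $x$ a \emph{constant} act simplifies dramatically, because for a constant act one has $\min_{p\in C_i}\int u(x)\,dp = \max_{p\in D_i}\int u(x)\,dp = u(x)$. Thus $f \succ_i x$ holds if and only if \emph{both} $\min_{p\in C_i}\int u(f)\,dp > u(x)$ and $\max_{p\in D_i}\int u(f)\,dp > u(x)$. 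Since $C_i \cap D_i \neq \emptyset$ guarantees $\min_{p\in C_i}\int u(f)\,dp \le \max_{p\in D_i}\int u(f)\,dp$, the binding constraint is the pessimistic one, so $f \succ_i x \iff \min_{p\in C_i}\int u(f)\,dp > u(x)$. This is the key reduction: comparison of an act to a constant act depends only on the pessimistic set $C_i$.

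Given this reduction, part (i) becomes purely about minimal support functions. The statement ``$\succ_1$ is more ambiguity averse than $\succ_2$'' means $f\succ_1 x \Rightarrow f\succ_2 x$ for all $f,x$, which by the reduction is equivalent to: for all $f$ and all thresholds $t=u(x)$ (ranging over the non-degenerate image of $u$), $\min_{p\in C_1}\int u(f)\,dp > t \Rightarrow \min_{p\in C_2}\int u(f)\,dp > t$. Since $u$ is non-constant and affine and $X$ is convex, $u(X)$ is a non-degenerate interval, so $t$ ranges over a dense enough set; taking suprema over admissible $t$ this is equivalent to $\min_{p\in C_1}\int u(f)\,dp \le \min_{p\in C_2}\int u(f)\,dp$ for all $f\in\mathcal{F}$. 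Writing $\phi = u(f)$, which ranges over all bounded measurable simple functions as $f$ ranges over $\mathcal{F}$ (up to the affine range of $u$), this says the support function $p\mapsto \int\phi\,dp$ has a smaller minimum over $C_1$ than over $C_2$ for every such $\phi$. The forward direction $C_2\subseteq C_1 \Rightarrow$ (smaller min over the larger set) is immediate from monotonicity of $\min$ under set inclusion. For the converse, I would invoke a separating-hyperplane argument: if $C_2\not\subseteq C_1$, pick $q\in C_2\setminus C_1$ and, using that $C_1$ is compact convex in $\Delta$ under the weak* topology, separate $q$ from $C_1$ by a measurable bounded function $\phi$, i.e.\ find $\phi$ with $\int\phi\,dq < \inf_{p\in C_1}\int\phi\,dp$; realizing $\phi$ as $u(f)$ for a suitable act $f$ then yields $\min_{C_2}\int u(f) \le \int\phi\,dq < \min_{C_1}\int u(f)$, contradicting the inequality. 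Part (ii) is entirely symmetric, using $x\succ_i f \iff \max_{p\in D_i}\int u(f)\,dp < u(x)$ and reversing the direction of the relevant inequalities, so that ``more ambiguity loving'' corresponds to a larger maximum over the optimistic set, giving $D_2\subseteq D_1$.

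The main obstacle will be the converse (necessity) direction, specifically the separating-hyperplane step: I must ensure the separating functional is genuinely of the form $u(f)$ for some \emph{simple, measurable} act $f$, and that the thresholds $t$ I use actually arise as $u(x)$ for admissible $x\in X$. The weak* topology on $\Delta$ is defined precisely via integrals of bounded measurable functions (see the footnote on $\Delta$), so the dual-pairing separating function is automatically bounded and measurable; the remaining care is to approximate or truncate it by a \emph{simple} function without destroying the strict separation, which is routine since simple functions are dense and the separation is strict. Matching thresholds to $u(X)$ is handled by rescaling $\phi$ and using the affine invariance and non-degeneracy of $u$, so that any desired comparison level can be hit. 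A minor subtlety throughout is that $\succ_i$ is a \emph{strict} relation, so the reductions produce strict inequalities and one must pass between strict and weak inequalities via density of the available thresholds $u(X)$; this is exactly why the non-constancy of $u$ (hence non-degeneracy of the interval $u(X)$) is needed.
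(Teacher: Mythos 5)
Your proof is correct, and its core reduction is exactly the one the paper uses: since a constant act $x$ satisfies $\min_{p\in C_i}\int u(x)\,dp=\max_{p\in D_i}\int u(x)\,dp=u(x)$, and $C_i\cap D_i\neq\emptyset$ forces $\min_{p\in C_i}\int u(f)\,dp\le\max_{p\in D_i}\int u(f)\,dp$, comparisons of acts against constants load only on the pessimistic functional (for $f\succ_i x$) or only on the optimistic one (for $x\succ_i f$). The paper expresses this same fact by noting that $f\succ_i x$ if and only if $f\succ_{ip}x$, where $\succ_{ip}$ is the auxiliary pessimistic relation from the proof of Theorem \ref{theorem1}, and then finishes in one line by citing the known comparative-statics characterization for maxmin (resp.\ maxmax) expected utility preferences in the sense of \cite{GM2002}: more ambiguity averse if and only if the set of priors is larger. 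Where you differ is that you re-prove that characterization from scratch, via the threshold argument (using that $\min_{p\in C_i}\int u(f)\,dp$ always lies in the convex set $u(X)$, so a separating constant $x$ exists whenever the two minima disagree --- a point you are right to flag, since it is what licenses passing between the strict implication and the weak pointwise inequality) together with a weak*-separating-hyperplane step, including the sup-norm approximation by simple functions and the affine rescaling into $u(X)$. That separation machinery is not new to the paper --- it is deployed verbatim in the Conclusion step of the proof of Theorem \ref{theorem1} and again in the proof of Proposition \ref{prop_comp_Bew_Two} --- so your route buys self-containedness at the cost of duplicating an argument the paper prefers to outsource; the paper's version buys brevity but leans on the reader knowing the \cite{GM2002} result and on the MEU/maxmax representations of $\succ_{ip}$ and $\succ_{io}$ established earlier. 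Part (ii) is handled symmetrically in both treatments, with the only care point being the sign flip in the separation step, which you note.
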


For a hope-and-prepare representation $(u,C,D)$, the two sets of priors $C$ and $D$ represent the level of pessimism and optimism related to the DM's ambiguity attitudes. More precisely, the relationship $C_2 \subseteq C_1$ means that, in the worst scenario, the level of welfare attained by the agent if she has preference relation $\succsim_1$ is lower than the one attained if she has preference relation $\succsim_2$. Similarly, $D_2 \subseteq D_1$ means that, in the best scenario, the level of welfare attained by the agent if she has preference relation $\succsim_1$ is higher than the one attained if she has preference relation $\succsim_2$.

Based on Proposition \ref{Prop_comparative_attitude} \textit{(i)}, by comparing the concordant preference $\succsim$ with representation $(u,C,C)$ to the non-concordant preference $\succsim_1$ with representation $(u,C_1,C)$, with $C_1 \subset C$, we can say that $\succsim_1$ \textit{is more ambiguity averse than it is ambiguity loving}. Similarly, the non-concordant preference $\succsim_2$ with representation $(u,C, D_2)$, with $D_2 \subset C$, \textit{can be said to be more ambiguity loving than it is ambiguity averse}. A DM with concordant preferences is as ambiguity loving as she is ambiguity averse, or, in other words, \textit{her pessimistic evaluation is as pessimistic as her optimistic evaluation is optimistic}.

\section{Choquet evaluations}\label{sec:choquet}

In this section, we extend the idea of combining a pessimistic and an optimistic evaluation to preference relations that may violate certainty independence (Axiom \ref{axiomcertain}). Essentially, Axiom \ref{axiomcertain} is replaced by the comonotonic independence condition of \cite{S1989}, and convexity (Axiom \ref{axiomconv}) is dropped. Then, the characterized representation involves two evaluations, each performed according to a Choquet integral, one of them ---the pessimistic evaluation--- being always below the other. Finally, if the preference relation is required to satisfy Axiom \ref{axiomconv}, then it must be a hope-and-prepare preference relation.

We impose a continuity axiom which is stronger than Axiom \ref{axiomcont}.

\begingroup
\renewcommand{\theaxiom}{2$'$}
\begin{axiom}[Continuity]\label{axiomcontprime}
For all $f,g,f',g'\in\mathcal F$, the set $\{\lambda\in[0,1]:\lambda f+(1-\lambda)g \succsim \lambda f'+(1-\lambda)g'\}$ is closed.
\end{axiom}
\endgroup
\addtocounter{axiom}{-1}

Axiom \ref{axiomcontprime} refers only to the mixture structure of $\mathcal F$; no topology on $\mathcal F$ is presumed.\footnote{If one endows $X$ with a topology for which mixtures and $u$ are continuous, Axiom \ref{axiomcontprime} is equivalent to closedness of $\{(f,g):f\succsim g\}$ in the product topology on $\mathcal F\times\mathcal F$.} 

Recall that $f,g\in\mathcal F$ are \textit{comonotonic} if there are no $s,t\in S$ with $f(s)\succ f(t)$ and $g(t)\succ g(s)$. We impose
the following condition in place of Axiom \ref{axiomcertain}, retaining
Axioms \ref{axiomclassic}, \ref{axiommono}, and \ref{axiomtransiincomp_new}.

\begingroup
\renewcommand{\theaxiom}{3$'$}
\begin{axiom}[Comonotonic independence]\label{axiomcomono}
For all pairwise comonotonic $f,g,h\in\mathcal F$ and all $\alpha\in(0,1)$,
$f\succsim g$ if and only if $\alpha f+(1-\alpha)h\succsim\alpha g+(1-\alpha)h$.
\end{axiom}
\endgroup
\addtocounter{axiom}{-1}

The following result shows that replacing certainty independence with comonotonic independence implies that the DM uses two evaluations based on Choquet integrals, one of the evaluations always lying below the other.

For $\nu$ a capacity on $(S,\Sigma)$, we let $\varphi \in B_0(\Sigma) \mapsto \int\varphi  d\nu \in \mathbb{R}$ denote the associated Choquet integral.\footnote{A (normalized) capacity on $(S,\Sigma)$ is a set function $\nu$ with $\nu(\emptyset)=0$, $\nu(S)=1$,
and $\nu(A)\le\nu(B)$ whenever $A\subseteq B$. Then, 
\[
\int \varphi \, d\nu \;=\; \int_0^{\infty} \nu\bigl(\{\varphi \ge t\}\bigr)\, dt \;+\; \int_{-\infty}^{0} \Bigl[\nu\bigl(\{\varphi \ge t\}\bigr) - 1\Bigr]\, dt,
\]
where both integrals on the right-hand side are Riemann integrals and $\{\varphi \ge t\} := \{s \in S : \varphi(s) \ge t\}$.}

\begin{theorem}\label{thm:choquet}
A binary relation $\succsim$ satisfies Axioms \ref{axiomclassic}, \ref{axiomcontprime}, \ref{axiomcomono}, \ref{axiommono}, and \ref{axiomtransiincomp_new} if and only if there exist a non-constant affine function $u:X\to\mathbb R$ and capacities $\nu_p,\nu_o$ on $(S,\Sigma)$ such that $\int u\circ fd\nu_p\le\int u\circ fd\nu_o$ for all $f\in\mathcal F$, and for all $f,g\in\mathcal F$,
\begin{equation*}
f\succsim g \quad \Longleftrightarrow\quad \int u\circ fd\nu_p\ge\int u\circ gd\nu_p \ \text{ and }\ \int u\circ fd\nu_o\ge\int u\circ gd\nu_o .
\end{equation*}
The function $u$ is unique up to a positive affine transformation, and, given $u$, the capacities $\nu_p$ and $\nu_o$ are unique.
\end{theorem}

The inequality $\int u\circ fd\nu_p\le\int u\circ fd\nu_o$ is the analogue, for capacities, of the non-disjointness of the prior sets in Theorem \ref{theo:characterization}: the Choquet integral with respect to $\nu_p$ can be interpreted as a pessimistic evaluation and that with respect to $\nu_o$ as an optimistic one. Because a capacity need not be supermodular or submodular, \textit{each} evaluation can accommodate ambiguity attitudes that vary across events---for instance the overweighting of unlikely extreme events---rather than a single attitude. 

The convexity property imposed in Axiom \ref{axiomconv} prevents this feature and yields a hope-and-prepare preference representation.

\begin{corollary}\label{cor:choquetconvex}
Suppose $\succsim$ satisfies Axioms \ref{axiomclassic}, \ref{axiomcontprime}, \ref{axiomcomono} \ref{axiommono}, and
\ref{axiomtransiincomp_new}, and let $u,\nu_p,\nu_o$ be as in Theorem \ref{thm:choquet}. Then $\succsim$ satisfies Axiom \ref{axiomconv} if and only if $\nu_p$ is supermodular and $\nu_o$ is submodular. In this case,
\begin{equation*}
\int u\circ fd\nu_p=\min_{p\in\mathrm{core}(\nu_p)}\int u\circ fdp, \qquad \int u\circ fd\nu_o=\max_{p\in\mathrm{anti\text{-}core}(\nu_o)}\int u\circ fdp,
\end{equation*}
that is, $\succsim$ is a hope-and-prepare preference with representation $\big(u,\mathrm{core}(\nu_p),\mathrm{anti\text{-}core}(\nu_o)\big)$.\footnote{A capacity $\nu$ is \textit{supermodular} if, for all $A,B \in \Sigma$,
$\nu(A\cup B)+\nu(A\cap B)\ge \nu(A)+\nu(B)$. It is \textit{submodular} if for all $A,B\in\Sigma$, $
\nu(A\cup B)+\nu(A\cap B)\le \nu(A)+\nu(B)$. The core and anti-core of a capacity $\nu$ on $(S,\Sigma)$ are $\mathrm{core}(\nu) = \{p \in \Delta : p(A) \geq \nu(A) \text{ for all } A \in \Sigma\}$ and $\mathrm{anti\text{-}core}(\nu) = \{p \in \Delta : p(A) \leq \nu(A) \text{ for all } A \in \Sigma\}$.}
\end{corollary}

The proof of this basic result is omitted.

\section{Conclusion}\label{sec:conclusion}

We provided a new perspective on the analysis of incomplete preferences under uncertainty by introducing and characterizing a new decision criterion involving multiple priors. It is based on a requirement of unanimity between an optimistic and a pessimistic evaluation reflecting the behavior of a DM who \textit{hopes for the best while she also prepares for the worst.} When both of these evaluations are computed according to the same set of scenarios, hope-and-prepare preferences compare ranges of expected utility according to the strong set order. 

Comparing hope-and-prepare preferences to the two closest incomplete criteria proposed in this framework---Bewley and twofold preferences---we argued, and made visible in our axiomatization, that the trade-off between decisiveness and conviction is addressed in a way that is more favorable to decisiveness.

We showed that an \textit{invariant biseparable completion} of a hope-and-prepare preference relation necessarily admits a \textit{unique} $\alpha$-MEU representation. We also established a converse: when the $\alpha$-MEU representation is unique,  the underlying hope-and-prepare preference can be recovered from the complete preference and its dual preference relation, whose construction involves complementary acts.

\section*{Appendix: Proofs}
\appendix

We let $B_{0}(\Sigma)$ denote the set of all real-valued, $\Sigma$-measurable simple functions on $S$. Because every act $f$ is simple, $u \circ f \in B_{0}(\Sigma)$ for every utility index $u: X \to \mathbb{R}$. A functional $I: B_{0}(\Sigma) \to \mathbb{R}$ is said to be \textit{constant-linear} if
\[
  I(a\varphi + b) = a\,I(\varphi) + b
  \quad \text{for all } \varphi \in B_{0}(\Sigma),\ a \in \mathbb{R}_+,\ b \in \mathbb{R},
\]
where, abusing notation, $b$ also denotes the constant function $s \in S \mapsto b$. It is said to be \textit{monotone} if $I(\varphi) \ge I(\phi)$ for all $\varphi, \phi \in B_{0}(\Sigma)$ with $\varphi \ge \phi$.

\section{Proof of Theorem \ref{theo:characterization}}

\noindent\textbf{Sufficiency.} Assume $\succsim$ admits the representation $(u,C,D)$. Axioms \ref{axiomclassic}--\ref{axiommono} are readily verified. We show that Axiom \ref{axiomtransiincomp_new} holds. 

Under the representation, for any $f \in \mathcal{F}$ and $x \in X$: $f \succsim x$ if and only if $\min_{p \in C}\int u(f)dp \geq u(x)$, and $x \succsim f$ if and only if $u(x) \geq \max_{p \in D}\int u(f)dp$. Thus, $f \Join x$ if and only if
$$\min_{p \in C}\int u(f)dp < u(x) < \max_{p \in D}\int u(f)dp.$$

Let $f,g \in \mathcal{F}$ and $x,y \in X$ be such that $f \Join x$, $x \succ g$, $g \Join y$, and $f \succ y$, as assumed in Axiom \ref{axiomtransiincomp_new}. One has
\begin{alignat*}{2}
    &\min_C \int u(f)dp < u(x) < \max_D \int u(f)dp, &\qquad &(\text{from } f \Join x) \\
    &u(x) \geq \max_D \int u(g)dp, &\qquad &(\text{from } x \succ g) \\
    &\min_C \int u(g)dp < u(y) < \max_D \int u(g)dp, &\qquad &(\text{from } g \Join y) \\
    &\min_C \int u(f)dp \geq u(y). &\qquad &(\text{from } f \succ y)
\end{alignat*}

Combining the third and the fourth lines yields $\min_C \int u(f)dp \geq u(y) > \min_C \int u(g)dp$. Combining the first and second yields $\max_D \int u(f)dp > u(x) \geq \max_D \int u(g)dp$. We have proved that $f \succ g$.

\medskip
\noindent\textbf{Necessity.} Assume that $\succsim$ satisfies Axioms \ref{axiomclassic}--\ref{axiomtransiincomp_new}. Since the restriction of $\succsim$ to $X$ is a non-trivial complete preorder satisfying certainty independence and continuity, there exists a non-constant affine function $u: X \to \mathbb{R}$, unique up to positive affine transformation, such that $x \succsim y$ if and only if $u(x) \geq u(y)$ for all $x, y \in X$.

From $\succsim$, define two binary relations on $\mathcal{F}$:
\begin{align*}
    f \succsim_p g &\iff \text{for all } x \in X,  g \succsim x \text{ implies } f \succsim x, \\
    f \succsim_o g &\iff \text{for all } x \in X,  x \succsim f \text{ implies } x \succsim g.
\end{align*}

It is clear that these relations are preorders.

\begin{step}\label{step:wo}
$\succsim_p$ and $\succsim_o$ are complete.
\end{step}

We prove the claims for $\succsim_p$; the argument for $\succsim_o$ is symmetric. By contradiction, assume that there are $f$ and $g$ in $\mathcal{F}$ such that $f \varnotsuccsim_p g$ and $g \varnotsuccsim_p f$. Then there exist $x, y \in X$ such that $g \succsim x$ but $f \varnotsuccsim x$, and $f \succsim y$ but $g \varnotsuccsim y$. Since $\succsim$ is complete on $X$, either $x \succsim y$ or $y \succ x$. If $x \succsim y$, then $g \succsim x \succsim y$; contradicting $g \varnotsuccsim y$. If $y \succ x$, then $f \succsim y \succsim x$; contradicting $f \varnotsuccsim x$.

\begin{step}\label{step:connection} Let $f, g \in \mathcal{F}$ be such that $f \succsim g$. Then, $f \succsim_p g$ and $f \succsim_o g$.
\end{step}

Let $x \in X$ be such that $g \succsim x$. By transitivity, $f \succsim x$. Thus, $f \succsim_p g$. Let $y \in X$ be such that $y \succsim f$. Then $y \succsim g$. Thus, $f \succsim_o g$.

\begin{step}\label{step:properties}
$\succsim_p$ and $\succsim_o$ satisfy continuity, certainty independence, and monotonicity.
\end{step}

We prove the claims for $\succsim_p$; the proofs for $\succsim_o$ are symmetric.

\smallskip\noindent\textit{Continuity.} Let $f, g, h \in \mathcal{F}$. By definition of $\succsim_p$, for all $\alpha \in [0,1]$, $\alpha f + (1-\alpha)g \succsim_p h$ if and only if, for all $x \in X$ such that $h \succsim x$, one has $\alpha f + (1-\alpha)g \succsim x$. Therefore,
$$\{\alpha \in [0,1] : \alpha f + (1-\alpha)g \succsim_p h\} = \bigcap_{x \in X: h \succsim x} \{\alpha \in [0,1] : \alpha f + (1-\alpha)g \succsim x\}.$$
Each set in the intersection is closed by Axiom \ref{axiomcont}, their intersection is thus closed. That $\{\alpha \in [0,1] : h \succsim_p \alpha f + (1-\alpha)g\}$ is closed is proved analogously. We conclude that $\succsim_p$ is continuous.

\smallskip\noindent\textit{Certainty independence.} As a recall, we must prove that for all $f,g \in \mathcal{F}$, $f \succsim_p g$ if and only if [$\alpha f + (1-\alpha)z \succsim_p \alpha g + (1-\alpha)z$, for all $\alpha \in (0,1)$ and $z \in X$]. The sufficiency direction immediately follows from the continuity of $\succsim_p$.

We now prove the necessity direction. Let $f,g \in \mathcal{F}$ be such that $f \succsim_p g$. Let $z \in X$ and let $x \in X$ be such that $\alpha g + (1-\alpha)z \succsim x$, for some $\alpha \in (0,1)$. 

First, assume that there exists $x_* \in X$ such that $x \succsim \alpha x_*+(1-\alpha)z$. By definition of a simple act, there is $x^*$ such that, for all $s \in S$, $x^* \succsim g(s)$. Consider the sets 
$$\{\beta \in [0,1]: \alpha [\beta x^*+(1-\beta)x_*]+(1-\alpha)z \succsim x\}$$ and $$\{\beta \in [0,1]: x \succsim \alpha [\beta x^*+(1-\beta)x_*]+(1-\alpha)z\}.$$ By Axiom \ref{axiommono}, they are non-empty. They are closed relative to $[0,1]$ by the continuity of $\succsim$; and they cover $[0,1]$ since $\succsim$ is complete on $X$. The connectedness of $[0,1]$ in turn implies that their intersection is non-empty: there is $w \in X$ such that $\alpha w+(1-\alpha)z\sim x$. Since, $\alpha g + (1-\alpha)z \succsim \alpha w + (1-\alpha)z$, Axiom \ref{axiomcertain} implies $g \succsim w$. Since $f \succsim_p g$, one has $f \succsim w$. By Axiom \ref{axiomcertain} again, $\alpha f + (1-\alpha)z \succsim \alpha w + (1-\alpha)z$, which yields $\alpha f + (1-\alpha)z \succsim x$ by transitivity.

Now, assume that for all $y \in X$, $\alpha y+(1-\alpha)z \succ x$. This holds in particular for $y_* \in X$ such that, for all $s \in S$, $f(s) \succsim y_*$. Then, by Axiom \ref{axiommono}, $\alpha f+(1-\alpha)z \succsim x$.

We have proved that $\alpha f + (1-\alpha)z \succsim_p \alpha g + (1-\alpha)z$.

\smallskip\noindent\textit{Monotonicity.} This property immediately follows from Step \ref{step:connection}.

\begin{step}\label{step:aversion}
$\succsim_p$ displays ambiguity aversion, i.e., for all $f, g\in \mathcal{F}$, $f\sim_p g$ implies $\alpha f +(1-\alpha)g \succsim_p f$; $\succsim_o$ displays ambiguity loving, i.e., for all $f, g\in \mathcal{F}$, $f\sim_p g$ implies $f \succsim_o \alpha f +(1-\alpha)g$.
\end{step}

We prove that $\succsim_p$ displays ambiguity aversion. Let $f, g\in \mathcal{F}$ be such that $f\sim_p g$. Let $x \in X$ be such that $f \succsim x$. Then, $g \sim_p f$ implies $g \succsim x$. Both $f$ and $g$ belong to the set $\{h \in \mathcal{F} : h \succsim x\}$, which is convex by Axiom \ref{axiomconv}, so $\alpha f + (1-\alpha)g \succsim x$. We have proved that $\alpha f + (1-\alpha)g \succsim_p f$. 

The argument for $\succsim_o$ is symmetric, using the convexity of $\{h \in \mathcal{F} : x \succsim h\}$.

\bigskip
By \cite{GS1989}, $\succsim_p$ admits a representation $f \mapsto \min_{p \in C} \int u_p (f) dp$ and $\succsim_o$ admits a representation $f \mapsto \max_{p \in D} \int u_o (f) dp$, where $C$ and $D$ are unique non-empty convex compact subsets of $\Delta$, and $u_p$ and $u_o$ are numerical representations of $\succsim_p$ and $\succsim_o$ on $X$, respectively. Since both $\succsim_p$ and $\succsim_o$ are consistent with $\succsim$ on $X$, the representing representations coincide up to positive affine transformation, and we may take $u_p = u_o = u$.  

\begin{step}
$C$ and $D$ are non-disjoint.   
\end{step}

\textit{Claim: $C \cap D \neq \emptyset$. if, and only if, for all $f \in \mathcal{F}$, $\min_{p \in C} \int u(f)dp \leq \max_{p \in D} \int u(f)dp$.}

We only prove the \textit{if part}, the other direction being trivial. We proceed by contraposition. Suppose that $C \cap D = \emptyset$. By the separating hyperplane theorem, there exists a bounded measurable function $\varphi : S\to \mathbb{R}$ such that $\min_{p\in C} \int \varphi dp > \max_{p\in D}\int \varphi dp$. Yet, there exists a sequence of simple functions $\{\varphi_n\}$ that converges (in supnorm topology) to $\varphi$. Since both $\Tilde{\varphi} \mapsto \min_{p\in C} \int \Tilde{\varphi} dp$ and $\Tilde{\varphi} \mapsto \max_{p\in D}\int \Tilde{\varphi} dp$ are continuous, there is $n\in \mathbb{N}$ such that $\min_{p\in C} \int \varphi_n dp > \max_{p\in D}\int \varphi_n dp$. As $a\varphi_n +b$ also satisfies this last inequality for all $a>0$ and $b\in \mathbb{R}$, one can choose $a>0$ and $b \in \mathbb{R}$ such that $a\varphi_n(s) +b \in u(X)$ for all $s\in S$, which implies $\varphi_n = u(f)$ for some $f\in \mathcal{F}$: $$\min_{p\in C} \int u(f) dp > \max_{p\in D}\int u(f) dp.$$

Now, we conclude this step by showing that for all $f$ and $g$ in $\mathcal{F}$, $\min_{p \in C} \int u(f) dp \leq \max_{p \in D} \int u(f) dp$. Assume by contradiction that for some $f \in \mathcal{F}$, $\min_{p \in C} \int u(f) dp > \max_{p \in D} \int u(f) dp$. Since $u(X)$ is an interval, there exists $x \in X$ such that
$$\min_{p \in C} \int u(f) dp > u(x) > \max_{p \in D} \int u(f) dp.$$
The left inequality yields $f \succ_p x$. By definition of $\succsim_p$, this implies $f \succsim x$, which, by Step \ref{step:connection}, yields $f \succsim_o x$. However, the right inequality yields $x \succ_o f$; a contradiction.

\paragraph{Conclusion.}\label{step:closure}
Let $f,g \in \mathcal{F}$. Then, $f \succsim g$ if and only if $\min_{p\in C} \int u(f) dp \geq \min_{p\in C} \int u(g) dp$ and $\max_{p\in D} \int u(f) dp \geq \max_{p\in D} \int u(g) dp$.\bigskip

\noindent\textit{Necessity direction.} It immediately follows from Step \ref{step:connection}.

\smallskip\noindent\textit{Sufficiency direction.} Assume that $\min_C \int u(f) dp \geq \min_C \int u(g) dp$ and $\max_D \int u(f) dp \geq \max_D \int u(g) dp$. Since $u$ is non-constant, pick $x_0 \in X$ with $u(x_0) < \sup u(X)$. By Axiom \ref{axiomcertain}, $f \succsim g$ if and only if $\tfrac{1}{2}f + \tfrac{1}{2}x_0 \succsim \tfrac{1}{2}g + \tfrac{1}{2}x_0$. Let $\hat{f} = \tfrac{1}{2}f + \tfrac{1}{2}x_0$ and $\hat{g} = \tfrac{1}{2}g + \tfrac{1}{2}x_0$. Then,
$$\max\left(\max_D \int u(\hat{f}) dp, \max_D \int u(\hat{g}) dp\right) \leq \tfrac{1}{2}\sup u(X) + \tfrac{1}{2}u(x_0) < \sup u(X).$$
Since $u(X)$ is an interval, there exists $x^* \in X$ such that $$u(x^*) > \max\left\{\max_D \int u(\hat{f}) dp, \max_D \int u(\hat{g}) dp\right\}.$$ For all $n \geq 1$, define $\hat{f}_n = \left(1 - \tfrac{1}{n}\right)\hat{f} + \tfrac{1}{n}x^*$. Then,
\begin{align*}
    \min_C \int u(\hat{f}_n) dp &= \left(1-\tfrac{1}{n}\right)\min_C \int u(\hat{f}) dp + \tfrac{1}{n} u(x^*) > \min_C \int u(\hat{g}) dp, \text{and } \\
    \max_D \int u(\hat{f}_n) dp &= \left(1-\tfrac{1}{n}\right)\max_D \int u(\hat{f}) dp + \tfrac{1}{n} u(x^*) > \max_D \int u(\hat{g}) dp, 
\end{align*}
that is, $\hat{f}_n \succ_p \hat{g}$ and $\hat{f}_n \succ_o \hat{g}$. 

Since $\hat{f}_n \succ_p \hat{g}$, there exists $y \in X$ such that $\hat{f}_n \succsim y$ and $\hat{g} \varnotsuccsim y$. Since $\hat{f}_n \succ_o \hat{g}$, there exists $x \in X$ with $x \succsim \hat{g}$ and $x \varnotsuccsim \hat{f}_n$. 

Several cases must be distinguished. First, if $\hat{f}_n \succsim x$, then $\hat{f}_n \succsim \hat{g}$, by transitivity. Since $\hat{g} \succsim \hat{f}_n$ would yield $x \succsim \hat{f}_n$, which is a contradiction, one has $\hat{f}_n \succ \hat{g}$. Second, by a similar argument, if $y \succsim \hat{g}$, then $\hat{f}_n \succ \hat{g}$. The third case is $\hat{f}_n \varnotsuccsim x$ and $y \varnotsuccsim \hat{g}$, \textit{i.e.} $\hat{f}_n \Join x$ and $\hat{g} \Join y$. The comparison $\hat{g} \succsim x$ would imply $\hat{f}_n \succsim x$, as $\hat{f}_n \succsim_p g$; we thus conclude that $x \succ \hat{g}$. In addition, $\hat{f}_n \succ y$. Indeed,  by the completeness of $\succsim$ on $X$, either $x \succsim y$ or $y \succ x$, and both cases are incompatible with $y \succsim \hat{f}_n$.\footnote{The first case implies $x \succsim \hat{f}_n$ and the second case implies $\hat{f}_n \succsim x$, both contradictions.} Therefore, $\hat{f}_n \Join x$, $x \succ \hat{g}$, $\hat{g} \Join y$, and $\hat{f}_n \succ y$; Axiom \ref{axiomtransiincomp_new} yields $\hat{f}_n \succ \hat{g}$.

 Therefore, we have proved that, for all $n \geq 1$, $\hat{f}_n \succ \hat{g}$. By Axiom \ref{axiomcont}, we conclude that $\hat{f} \succsim \hat{g}$, which, by Axiom \ref{axiomcertain}, is equivalent to $f \succsim g$.

\section{Proof of Theorem \ref{propconcor}}

We prove $(i)$; the proof of $(ii)$ is symmetric.

\medskip\noindent\textbf{Sufficiency.} Assume $\succsim$ admits the representation $(u,C,D)$ with $D \subseteq C$. Let $f \in \mathcal{F}$, $x \in X$ be such that $f \succsim x$, and let $h \in \mathcal{F}$, and $\alpha \in (0,1]$. 

Consider first the maxmin evaluation:
\begin{align*}
\min_{p\in C} \int u(\alpha f + (1-\alpha)h)dp &\geq \alpha \min_{p\in C}\int u(f)dp + (1-\alpha) \min_{p\in C}\int u(h)dp\\
&\geq \alpha u(x) + (1-\alpha) \min_{p\in C}\int u(h)dp \\
&=\min_{p\in C} \int u(\alpha x + (1-\alpha)h)dp.
\end{align*}

The second inequality follows from $f \succsim x$, which implies $\min_{p\in C}\int u(f)dp \geq u(x)$.

Consider now the maxmax evaluation. Let $q \in \argmax_{p\in D}\int u(h)dp$. The inclusion $D \subseteq C$ implies $\int u(f)dq \geq \min_{p\in C}\int u(f)dp \geq u(x)$. As a consequence,
\begin{align*}
\max_{p\in D} \int u(\alpha f + (1-\alpha)h)dp &\geq \alpha \int u(f)dq + (1-\alpha)\max_{p\in D}\int u(h)dp \\
&\geq \alpha u(x) + (1-\alpha)\max_{p\in D}\int u(h)dp\\
&=\max_{p\in D} \int u(\alpha x + (1-\alpha)h)dp.
\end{align*}

We have proved that $\alpha f + (1-\alpha)h \succsim \alpha x + (1-\alpha)h$, which is the conclusion of Axiom \ref{axiompess}.

\bigskip\noindent\textbf{Necessity.} By contraposition, assume $\succsim$ admits the representation $(u,C,D)$ with $D \not\subseteq C$, \text{i.e.} there is $p^* \in D \setminus C$. Then, there exist $f \in \mathcal{F}$ and $x \in X$ such that\footnote{We developed the argument in the proof of Theorem \ref{theo:characterization}.}
$$\int u(f)dp^* < u(x) \leq \min_{p\in C}\int u(f)dp.$$
Since $C \cap D \neq \emptyset$, one has $\max_{p \in D}\int u(f)dp \geq \min_{p\in C}\int u(f)dp \geq u(x)$. This yields $f \succsim x$. 

Now, let $h \in \mathcal{F}$ and $\alpha \in (0,1)$ be such that $\alpha f + (1-\alpha)h$ is a constant act.\footnote{Such $\alpha \in (0,1)$ and $h \in \mathcal{F}$ always exist. We prove it for completeness. It is trivially true if $f$ is a constant act. Suppose it is not, and let $\{x_1,...,x_k\}$ denote the set of different values taken by $f$ over $S$ ($k\geq 2$). Let $c=\frac{1}{k}\sum_{i=1}^k x_i \in X$. Now, define $h: S \to X$ by $h(s)=\frac{c-\alpha x_{i(s)}}{1-\alpha}$, where $i(s)$ is the index $i$ such that $x_i=f(s)$, for all $s \in S$. Then, for all $s \in S$, $h(s)=\frac{1}{k-1}\sum_{j\neq i(s)} x_j$, which is in $X$, so that $h$ is in $\mathcal{F}$. By construction, for all $s \in S$, $\frac{1}{k}f(s) +(1-\frac{1}{k})h(s)=c$.} The fact that $\alpha f + (1-\alpha)h$ is a constant act implies
$$\max_{p\in D}\int u(\alpha f + (1-\alpha)h)dp = \alpha \int u(f)dp^* + (1-\alpha)\int u(h)dp^*.$$
Since $\int u(f)dp^* < u(x)$, one has

\begin{align*}
\alpha \int u(f)dp^* + (1-\alpha)\int u(h)dp^* &< \alpha u(x) + (1-\alpha)\int u(h)dp^* \\
&\leq \max_{p\in D}\int u(\alpha x + (1-\alpha)h)dp.
\end{align*}

Hence $\alpha f + (1-\alpha)h \varnotsuccsim \alpha x + (1-\alpha)h$, that is, Axiom \ref{axiompess} does not hold. 

\section{Proof of Theorem \ref{theorem-extention}}

We prove that $(i)$ implies $(ii)$; the converse is routine. Assume $\succsim$ is a hope-and-prepare preference with unique representation $(u, C, D)$, and $\succsim^*$ is an invariant biseparable extension of $\succsim$. Since $\succsim^*$ is a complete preorder satisfying continuity, certainty independence, and monotonicity, Lemma 1 in \cite{GM2004} implies that there are a monotonic, constant-linear functional $I : B_0(\Sigma) \to \mathbb{R}$ and a non-constant affine function $u' : X \to \mathbb{R}$ such that, for all $f, g \in \mathcal{F}$,
$$f \succsim^* g \iff I(u'(f)) \geq I(u'(g)),$$
where $I$ is unique and $u'$ is unique up to positive affine transformation. Since $\succsim^*$ extends $\succsim$, the two preferences agree on constant acts, so one can assume without loss of generality $u' = u$.

Define $I'(\varphi) = \min_{p\in C}\int \varphi dp$ and $I''(\varphi) = \max_{p\in D}\int \varphi dp$ for all $\varphi \in B_0(\Sigma)$. These are monotonic and constant-linear, and $I' \leq I''$ since $C \cap D \neq \emptyset$.

We verify condition $(i)$ of Lemma A.3 in \cite{FRY2022}: for all $\varphi, \psi \in B_0(\Sigma)$, $I'(\varphi) \geq I'(\psi)$ and $I''(\varphi) \geq I''(\psi)$ implies $I(\varphi) \geq I(\psi)$. Since $u(X)$ is an interval, for any $\varphi, \psi \in B_0(\Sigma)$ there exist $a > 0$ and $b \in \mathbb{R}$ such that $a\varphi + b = u(f)$ and $a\psi + b = u(g)$ for some $f, g \in \mathcal{F}$. By constant-linearity, $I'(u(f)) \geq I'(u(g))$ and $I''(u(f)) \geq I''(u(g))$, so $f \succsim g$. Since $\succsim^*$ extends $\succsim$, $f \succsim^* g$, giving $I(u(f)) \geq I(u(g))$, and constant-linearity yields $I(\varphi) \geq I(\psi)$.

By Lemma A.3 in \cite{FRY2022}, there exists $\alpha \in [0,1]$ such that $I = \alpha I' + (1-\alpha)I''$, giving
$$I(u(f)) = \alpha \min_{p\in C}\int u(f)dp + (1-\alpha)\max_{p\in D}\int u(f)dp \: \text{ for all } f \in \mathcal{F}.$$ Finally, if $\succsim$ is incomplete, there exists $f \in \mathcal{F}$ with $\min_{p\in C}\int u(f)dp < \max_{p\in D}\int u(f)dp$. For any $\alpha' \neq \alpha$, one would have $\alpha' I'(u(f)) + (1-\alpha')I''(u(f)) \neq I(u(f))$, contradicting the uniqueness of $I$. Thus, $\alpha$ is unique.

\section{Proof of Lemma \ref{lemma:dual}}
Let $V(h) = \alpha \min_{p \in C} \int u(h)dp + (1-\alpha) \max_{p \in C} \int u(h)dp$ represent $\unrhd$. Fix $f, g \in \mathcal{F}$ with complements $\bar{f}, \bar{g}$ and a constant act $x$ such that $\frac{1}{2}f + \frac{1}{2}\bar{f} \sim x \sim \frac{1}{2}g + \frac{1}{2}\bar{g}$.

By complementarity, $\frac{1}{2}f + \frac{1}{2}\bar{f}$ has constant utility across states, equal to $u(x)$ since it is indifferent to $x$. As $u$ is affine, $u(\bar{f}(s)) = 2u(x) - u(f(s))$ for all $s$, and likewise for $\bar{g}$. Thus,
$$\min_{p \in C} \int u(\bar{f})dp = 2u(x) - \max_{p \in C} \int u(f)dp, \qquad \max_{p \in C} \int u(\bar{f})dp = 2u(x) - \min_{p \in C} \int u(f)dp,$$
so $V(\bar{f}) = 2u(x) - V_d(f)$ and $V(\bar{g}) = 2u(x) - V_d(g)$. Therefore,
$$f \unrhd_d g \iff \bar{g} \unrhd \bar{f} \iff V(\bar{g}) \geq V(\bar{f}) \iff V_d(f) \geq V_d(g).$$
The last inequality does not involve $\bar{f}, \bar{g}$, or $x$, so $V_d$ represents $\unrhd_d$. 

\section{Proof of Proposition \ref{prop:recovery}}
For $h \in \mathcal{F}$, write $a_h = \min_{p \in C} \int u(h)dp$ and $b_h = \max_{p \in C} \int u(h)dp$. Since $c_h$ and $z_h$ are the certainty equivalents of $h$ under $\unrhd$ and $\unrhd_d$, Lemma \ref{lemma:dual} gives
$$u(c_h) = \alpha a_h + (1-\alpha) b_h, \qquad u(z_h) = (1-\alpha) a_h + \alpha b_h.$$
The mixtures in the two conditions defining $\succsim^\circ$ are constant acts, so $\unrhd$ ranks them by $u$. As $u$ is affine,
$$u\big(\alpha c_f + (1-\alpha) z_g\big) - u\big(\alpha c_g + (1-\alpha) z_f\big) = \alpha[u(c_f) - u(c_g)] - (1-\alpha)[u(z_f) - u(z_g)].$$
Substituting the identities and using $\alpha^2 - (1-\alpha)^2 = 2\alpha - 1$, this difference equals $(2\alpha - 1)(a_f - a_g)$. Similarly,
$$u\big(\alpha z_f + (1-\alpha) c_g\big) - u\big(\alpha z_g + (1-\alpha) c_f\big) = \alpha[u(z_f) - u(z_g)] - (1-\alpha)[u(c_f) - u(c_g)],$$
which equals $(2\alpha - 1)(b_f - b_g)$. Since $2\alpha - 1 > 0$, the first condition holds if and only if $a_f \geq a_g$, and the second if and only if $b_f \geq b_g$. Thus, $f \succsim^\circ g$ if and only if $a_f \geq a_g$ and $b_f \geq b_g$. By Definition \ref{defunanimrep}, these inequalities characterize the concordant hope-and-prepare preference with representation $(u, C, C)$, so $\succsim^\circ$ is that preference. By Theorem \ref{theorem-extention}, its invariant biseparable completion is the standard $\alpha$-MEU preference with set of priors $C$, namely $\unrhd$. Therefore, $\succsim^\circ$ is the concordant hope-and-prepare preference whose invariant biseparable completion is $\unrhd$.

\section{Proof of Proposition \ref{prop_comp_Bew_Two}}

\noindent$(i)$ \textbf{Sufficiency.} Suppose $C_{HP} \cup D_{HP} \subseteq C_B$. If $f \succsim_B g$, then $\int u(f)dp \geq \int u(g)dp$ for all $p \in C_B$, and thus for all $p \in C_{HP} \cup D_{HP}$. It follows that
$$\min_{p \in C_{HP}} \int u(f)dp \geq \min_{p \in C_{HP}} \int u(g)dp \quad \text{and} \quad \max_{p \in D_{HP}} \int u(f)dp \geq \max_{p \in D_{HP}} \int u(g)dp,$$
so that $f \succsim_{HP} g$.

\smallskip\noindent\textbf{Necessity.} Suppose $C_{HP} \cup D_{HP} \not\subseteq C_B$. First, assume $p^* \in C_{HP} \setminus C_B$. Since $C_B$ is convex and compact, the separating hyperplane theorem implies the existence of $f \in \mathcal{F}$ and $x \in X$ satisfying
\begin{align*}
\min_{p \in C_B} \int u(f)dp > u(x) > \int u(f)dp^*.
\end{align*}
The left inequality yields $f \succsim_B x$. The right inequality yields 
$$\min_{p \in C_{HP}} \int u(f)dp \leq \int u(f)dp^* < u(x),$$ so $f \varnotsuccsim_{HP} x$, contradicting $\succsim_B \subseteq \succsim_{HP}$.

Now assume $p^* \in D_{HP} \setminus C_B$. By the same argument, there exist $f \in \mathcal{F}$ and $x \in X$ with
$$\int u(f)dp^* > u(x) > \max_{p \in C_B} \int u(f)dp.$$
The right inequality yields $x \succsim_B f$. The left inequality yields 
$$\max_{p \in D_{HP}} \int u(f)dp \geq \int u(f)dp^* > u(x),$$ so that $x \varnotsuccsim_{HP} f$, again a contradiction.

\medskip\noindent$(ii)$ \textbf{Sufficiency.} Suppose $C_{HP} \subseteq C_T$ and $D_{HP} \subseteq D_T$. If $f \succsim_T g$, then $\min_{p \in C_T} \int u(f)dp \geq \max_{p \in D_T} \int u(g)dp$. The set inclusions yield
$$\min_{p \in C_{HP}} \int u(f)dp \geq \min_{p \in C_T} \int u(f)dp \geq \max_{p \in D_T} \int u(g)dp \geq \max_{p \in D_{HP}} \int u(g)dp.$$

In addition, since $C_{HP} \cap D_{HP} \neq \emptyset$,
$$\max_{p \in D_{HP}} \int u(f)dp \geq \min_{p \in C_{HP}} \int u(f)dp \geq \max_{p \in D_{HP}} \int u(g)dp.$$
We have proved that $f \succsim_{HP} g$.

\smallskip\noindent\textbf{Necessity.} Suppose $\succsim_T \subseteq \succsim_{HP}$. If $p^* \in C_{HP} \setminus C_T$, the separating hyperplane theorem implies the existence of $f \in \mathcal{F}$ and $x \in X$ satisfying
$$\min_{p \in C_T} \int u(f)dp > u(x) > \int u(f)dp^*.$$
The left inequality yields $f \succsim_T x$, since $\min_{p \in C_T} \int u(f)dp \geq u(x) = \max_{p \in D_T} u(x)$. The right yields 
$$\min_{p \in C_{HP}} \int u(f)dp \leq \int u(f)dp^* < u(x),$$ so that $f \varnotsuccsim_{HP} x$, a contradiction.

If $p^* \in D_{HP} \setminus D_T$, the separating hyperplane theorem yields $f \in \mathcal{F}$ and $x \in X$ with
$$\int u(f)dp^* > u(x) > \max_{p \in D_T} \int u(f)dp.$$
The right inequality gives $x \succsim_T f$, since $u(x) = \min_{p \in C_T} u(x) \geq \max_{p \in D_T} \int u(f)dp$. The left gives 
$$\max_{p \in D_{HP}} \int u(f)dp \geq \int u(f)dp^* > u(x),$$ so $x \varnotsuccsim_{HP} f$, a contradiction.

\section{Proof of Theorem \ref{thm:choquet}}
\medskip\noindent\textbf{Sufficiency.} Assume there exist a non-constant affine $u:X\to\mathbb R$ and capacities $\nu_p,\nu_o$ on $(S,\Sigma)$, with $\int u\circ fd\nu_p\le\int u\circ fd\nu_o$ for all $f\in\mathcal F$, representing $\succsim$ as in Theorem \ref{thm:choquet}. The satisfaction of Axioms \ref{axiomclassic}, \ref{axiomcomono}, \ref{axiommono}, and \ref{axiomtransiincomp_new} is readily checked. We focus on Axiom \ref{axiomcontprime}. Let $f,g,f',g'\in\mathcal F$ and, for $k\in\{p,o\}$, let
\begin{equation*}
\phi_k(\lambda)=\int u\circ\big(\lambda f+(1-\lambda)g\big)d\nu_k
 -\int u\circ\big(\lambda f'+(1-\lambda)g'\big)d\nu_k.
\end{equation*}
Each $\phi_k$ is continuous on $[0,1]$, since $\lambda\mapsto u\circ(\lambda f+(1-\lambda)g)$ and $\lambda\mapsto u\circ(\lambda f' +(1-\lambda)g')$ are affine and the Choquet integral is $1$-Lipschitz in the supremum norm. By assumption, $\{\lambda\in[0,1]:\lambda a+(1-\lambda)b\succsim\lambda c+(1-\lambda)d\} =\{\lambda:\phi_p(\lambda)\ge0\}\cap\{\lambda:\phi_o(\lambda)\ge0\}$, which is closed.

\bigskip\noindent\textbf{Necessity.}
Let $\succsim$ satisfy Axioms \ref{axiomclassic}, \ref{axiomcontprime}, \ref{axiomcomono}, \ref{axiommono} and \ref{axiomtransiincomp_new}. We note that Axiom \ref{axiomcontprime} implies Axiom \ref{axiomcont}, and that Axiom \ref{axiomcomono} coincides with Axiom \ref{axiomcertain} on $X$. Thus, there is a non-constant affine function $u:X\to\mathbb R$, unique up to a positive affine transformation such that, for all $x,y \in X$,  $x\succsim y$ if and only if $u(x)\ge u(y)$. Define $\succsim_p$ and $\succsim_o$ as in the proof of Theorem \ref{theo:characterization}. Then, Steps \ref{step:wo} and Step \ref{step:connection} hold without modification. 

The proof of Step \ref{step:properties} remains valid when Axiom \ref{axiomcertain} is replaced by Axiom \ref{axiomcomono}, since the only triples for which we use Axiom \ref{axiomcertain} in the original proof are pairwise comonotonic. 

Therefore, $\succsim_p$ and $\succsim_o$ are complete preorders satisfying continuity, monotonicity, and certainty independence, and are thus invariant biseparable. By Lemma 1 in \cite{GM2004}, there are monotonic, constant-linear functionals $I_p,I_o:B_0(\Sigma)\to\mathbb R$ such that, for all $f,g\in\mathcal F$,
\begin{equation*}
f\succsim_p g\iff I_p(u\circ f)\ge I_p(u\circ g), \qquad f\succsim_o g\iff I_o(u\circ f)\ge I_o(u\circ g),
\end{equation*}
and $I_p(u(x))=I_o(u(x))=u(x)$ for all $x\in X$.

We now show that $I_p$ is comonotonically superadditive.\footnote{The definition of comonotonic acts is readily adapted to mappings in $B_0(\Sigma)$.} A symmetric argument shows that $I_o$ is comonotonically subadditive. 

Let $f,h\in\mathcal F$ be comonotonic and $\alpha\in[0,1]$. Since $I_p$ is monotonic and constant-linear, $\min_{s\in S}u(f(s))\le I_p(u\circ f)\le\max_{s\in S}u(f(s))$. Both bounds lie in $u(X)$, which is an interval, so that $I_p(u\circ f)\in u(X)$ and there is $c_f\in X$ such that  $u(c_f)=I_p(u\circ f)$, \textit{i.e.} $f\sim_p c_f$. Let $c_h$ be defined in the same way with respect to $h$. Since $c_f \in X$, the comparison $f\succsim_p c_f$ yields $f\succsim c_f$, by definition of $\succsim_p$, and, similarly, $h\succsim c_h$. The triples $(f,c_f,h)$ and $(h,c_h,c_f)$ are pairwise comonotonic: applying twice Axiom~\ref{axiomcomono} yields
\[
\alpha f+(1-\alpha)h \succsim \alpha c_f+(1-\alpha)h \succsim \alpha c_f+(1-\alpha)c_h.
\]
By Step~\ref{step:connection}, the comparisons above imply $\alpha f+(1-\alpha)h\succsim_p\alpha c_f+(1-\alpha)c_h$, which is equivalent to
\begin{equation}\label{eq:mixsuper}
 I_p\big(u\circ(\alpha f+(1-\alpha)h)\big)\ge\alpha I_p(u\circ f)+(1-\alpha) I_p(u\circ h).
\end{equation}
We extend \eqref{eq:mixsuper} to arbitrary comonotonic $\varphi,\psi\in B_0(\Sigma)$. As $\varphi,\psi$ are bounded and $u(X)$ is a non-degenerate interval, there are $a>0$ and $b\in\mathbb R$ such that $a\varphi+b$ and $a\psi+b$ take value in $u(X)$. Let $f,g \in \mathcal{F}$ be such that $a\varphi+b=u\circ f$
and $a\psi+b=u\circ h$. The acts $f$ and $h$ are comonotonic. Since $I_p$ is positively homogeneous, taking  $\alpha=\tfrac12$ in \eqref{eq:mixsuper} gives $I_p(u\circ f+u\circ h)\ge I_p(u\circ f)+I_p(u\circ h)$. Since $I_p$ is constant-linear, $I_p(u\circ f+u\circ h)=I_p\big(a(\varphi+\psi)+2b\big)=a I_p(\varphi+\psi)+2b$
and $I_p(u\circ f)+I_p(u\circ h)=a\big(I_p(\varphi)+I_p(\psi)\big)+2b$. Dividing by $a>0$, we have proved that 
\[
 I_p(\varphi+\psi)\ge I_p(\varphi)+I_p(\psi).
\]

We will prove that the inequality above is always an equality, meaning that $I_p$ is comonotonically additive; the argument for $I_o$ is symmetric. By contradiction, assume that there are comonotonic $\phi$ and $\psi$ in $B_0(\Sigma)$ such that the inequality above is strict. Then, by the properties of $I_p$ and $u$, there must exist comonotonic acts $f,g \in \mathcal{F}$ such that 

\[
 I_p \left(u\circ\left(\tfrac12 f+\tfrac12 g \right)\right)>\tfrac12 I_p(u\circ f)+\tfrac12 I_p(u\circ g).
\]

In addition, given positive homogeneity and constant-linearity, we may assume without loss of generality that $I_p(u\circ f)=I_p(u\circ g)= \mu$, and that all the mappings introduced below also take value in
$u(X)$. 

Assume $I_o(u\circ f)\ge I_o(u\circ g)$, and set
$\rho:=I_p \left(u\circ \left(\tfrac12 f+\tfrac12 g\right)\right)-\mu>0$. The argument in the case $I_o(u\circ f)\le I_o(u\circ g)$ is the same, simply reversing the role of $f$ and $g$.

Let $f'\in\mathcal F$ such that $u\circ f'=u\circ f+\rho$. Then $f'$ is comonotonic with $g$, and
\[
I_p(u\circ f')=\mu+\rho>\mu=I_p(u\circ g),\qquad I_o(u\circ f')=I_o(u\circ f)+\rho\ge I_o(u\circ g)+\rho>I_o(u\circ g).
\]
This implies $f'\succ_p g$ and $f'\succ_o g$, and thus $f'\succ g$. Since the triple $(f',g,f')$ is pairwise comonotonic, Axiom \ref{axiomcomono} implies $f'\succsim\tfrac12 g+\tfrac12 f'$. Then, by Step \ref{step:connection}, we get $f'\succsim_p\tfrac12 g+\tfrac12 f'$, that is, $I_p(u\circ f')\ge I_p \left(u\circ \left(\tfrac12 g+\tfrac12 f' \right)\right)$. Yet, 

\[
I_p \left(u\circ \left(\tfrac12 g+\tfrac12 f'\right)\right) =I_p \left(u\circ \left(\tfrac12 g+\tfrac12 f \right)\right)+\tfrac{\rho}{2} =(\mu+\rho)+\tfrac{\rho}{2}>\mu+\rho=I_p(u\circ f'),
\]
a contradiction. Thus, $I_p$ is comonotonically additive.

We have thus proved that  $\succsim_p$ and $\succsim_o$ satisfy comonotonic independence. Each is a complete preorder satisfying continuity, monotonicity, and comonotonic independence, so by the representation theorem of \cite{S1989}, there are unique capacities $\nu_p,\nu_o$ on $(S,\Sigma)$ such that, for all $f \in \mathcal{F}$, $I_p(u \circ f)=\int u \circ f \: d\nu_p$ and $I_o(u \circ f)=\int u \circ f \: d\nu_o$. 

Next we show that for all $f \in \mathcal{F}$, $\int u\circ f d\nu_p\le\int u\circ f d\nu_o$. Suppose not. Since $u(X)$ is an interval, there is $x\in X$ with $\int u\circ f d\nu_p>u(x)>\int u\circ f d\nu_o$. The left inequality gives $f\succ_p x$, hence $f\succsim x$ by definition of $\succsim_p$, hence $f\succsim_o x$ by Step \ref{step:connection}. The right inequality gives $x\succ_o f$, a contradiction.

It remains to prove that $[\int u\circ f d\nu_p\ge\int u\circ g d\nu_p \text{ and}
\int u\circ f d\nu_o\ge\int u\circ g d\nu_o]$ implies $f\succsim g$. Let $x^*,x_*\in X$ be such that $u(x^*)>u(x_*)$ ($u$ is non constant). For each integer $n\ge1$, let $f_n=\left(1-\tfrac1n \right)f+\tfrac1n x^*$ and $g_n=\left(1-\tfrac1n \right)g+\tfrac1n x_*$. Then, for $k\in\{p,o\}$, for all $n \geq 1$,
\[
\int u\circ f_n d\nu_k-\int u\circ g_n d\nu_k =\left(1-\tfrac1n\right) \left(\int u\circ f d\nu_k-\int u\circ g d\nu_k\right) +\tfrac1n\big(u(x^*)-u(x_*)\big)>0 .
\]
Hence $f_n\succ_p g_n$ and $f_n\succ_o g_n$. A similar argument as in the Conclusion of the Proof of Theorem \ref{theo:characterization}, yields $f_n\succ g_n$. Letting $\lambda_n=1-\tfrac1n$, one has $\lambda_n f+(1-\lambda_n)x^*\succsim\lambda_n g+(1-\lambda_n)x_*$ for all $n$. The set
$\{\lambda\in[0,1]:\lambda f+(1-\lambda)x^*\succsim\lambda g+(1-\lambda)x_*\}$ is closed by Axiom~\ref{axiomcontprime} and contains the sequence $(\lambda_n)_{n \geq 1}$, which converges to 1. Therefore, $f\succsim g$.

\singlespacing
\small
\bibliographystyle{ecta-fullname}
\bibliography{bibdecision}

\end{document}